\documentclass[conference,compsoc]{IEEEtran}
\IEEEoverridecommandlockouts
\usepackage{cite}
\usepackage{amsmath,amssymb,amsfonts,amsthm}
\newtheorem{definition}{Definition}
\newtheorem{lemma}{Lemma}
\newtheorem{theorem}{Theorem}
\usepackage{algorithmic}
\usepackage{graphicx}
\usepackage{textcomp}
\usepackage{xcolor}
\usepackage{url}
\usepackage[hidelinks]{hyperref} 
\usepackage{subfigure}

\def\BibTeX{{\rm B\kern-.05em{\sc i\kern-.025em b}\kern-.08em
    T\kern-.1667em\lower.7ex\hbox{E}\kern-.125emX}}

\usepackage[inline]{enumitem}
\usepackage{listings}
\usepackage{float}
\usepackage{booktabs}
\usepackage{comment}

\usepackage{empheq}

\usepackage{bm}
\usepackage{url}
\usepackage{tikz-cd}
\usepackage[edges]{forest}
\usetikzlibrary{arrows.meta}
\usepackage[T1]{fontenc}
\usepackage{soul}

\definecolor{backcolour}{rgb}{0.95,0.95,0.92}
\definecolor{codegreen}{rgb}{0,0.6,0}
\definecolor{codepurple}{rgb}{0.58,0,0.82}

\pgfmathsetmacro{\redcomp}{11/255}
\pgfmathsetmacro{\greencomp}{96/255}
\pgfmathsetmacro{\bluecomp}{172/255}
\definecolor{codeblue}{rgb}{\redcomp, \greencomp, \bluecomp}

\lstdefinestyle{pnnstyle}{
    backgroundcolor=\color{backcolour},
    commentstyle=\color{codegreen},
    keywordstyle=\color{codeblue},
    stringstyle=\color{codepurple},
    basicstyle=\ttfamily\footnotesize,  
    numberstyle=\tiny,  
    breakatwhitespace=false,
    xleftmargin=12pt,
    numbers=left,
    numbersep=4pt,  
    frame=lines,
    rulecolor=\color{codeblue},
    breaklines=true,
    captionpos=b,
    keepspaces=true,
    showspaces=false,
    showstringspaces=false,
    showtabs=false,
    tabsize=2,
    mathescape=true,
    framexleftmargin=3pt,  
    escapechar=|,
    morekeywords={int64_t}  
}

\lstset{style=pnnstyle}

\usepackage{array}
\newcolumntype{P}[1]{>{\raggedright\arraybackslash}p{#1}}
\usepackage{marvosym} 

\usepackage{nicematrix}
\usepackage{wrapfig}
\usepackage{multirow}

\usepackage[linesnumbered,ruled,vlined]{algorithm2e}
\SetKwComment{Comment}{$\triangleright$\ }{}

\SetCommentSty{mycommfont}

\RestyleAlgo{ruled}
\DontPrintSemicolon
\makeatletter
\patchcmd{\algocf@makecaption@ruled}{\hsize}{\textwidth}{}{} 
\patchcmd{\@algocf@start}{-1.5em}{0em}{}{} 
\makeatother

\makeatletter
\newcommand{\removelatexerror}{\let\@latex@error\@gobble}
\makeatother

\RestyleAlgo{ruled}
\DontPrintSemicolon
\makeatletter
\patchcmd{\algocf@makecaption@ruled}{\hsize}{\textwidth}{}{} 
\patchcmd{\@algocf@start}{-1.5em}{0em}{}{} 
\makeatother

\usepackage[operators,sets,landau,probability]{cryptocode}

\newcommand\getsdollar{\mathrel{{\leftarrow}\vcenter{\hbox{\scriptsize\rmfamily\upshape\$}}}}
\newcommand{\red}[1]{\color{red!70!black}{#1}\color{black}}
\definecolor{darkgreen}{rgb}{0.0, 0.5, 0.0}

\newlist{questions}{enumerate}{1}
\setlist[questions]{label=\textbf{Question \arabic*}, leftmargin=*}

\definecolor{codegreen}{rgb}{0,0.6,0}
\definecolor{codegray}{rgb}{0.5,0.5,0.5}
\definecolor{codepurple}{rgb}{0.58,0,0.82}
\definecolor{backcolour}{rgb}{0.95,0.95,0.95}

\definecolor{main-color}{rgb}{0.6627, 0.7176, 0.7764}
\definecolor{back-color}{rgb}{0.1686, 0.1686, 0.1686}
\definecolor{string-color}{rgb}{0.3333, 0.5254, 0.345}
\definecolor{key-color}{rgb}{0.8, 0.47, 0.196}

\definecolor{light-gray}{gray}{0.95}

\usepackage[column=O]{cellspace}

\usepackage{etoolbox}

\newcounter{program}

\pagestyle{plain} 

\begin{document}

\title{Systematic Use of Random Self-Reducibility against Physical Attacks}

\author{
    \IEEEauthorblockN{
        Ferhat Erata\IEEEauthorrefmark{1},
        TingHung Chiu\IEEEauthorrefmark{2},
        Anthony Etim\IEEEauthorrefmark{1},
        Srilalith Nampally\IEEEauthorrefmark{2},
        Tejas Raju\IEEEauthorrefmark{2},
        Rajashree Ramu\IEEEauthorrefmark{2},\\
        Ruzica Piskac\IEEEauthorrefmark{1},
        Timos Antonopoulos\IEEEauthorrefmark{1},
        Wenjie Xiong\IEEEauthorrefmark{2}
        Jakub Szefer\IEEEauthorrefmark{1},
    }
    \IEEEauthorblockA{\IEEEauthorrefmark{1}Yale University, United States, \IEEEauthorrefmark{2}Virginia Tech, United States}
}

\maketitle
\thispagestyle{plain} 

\begin{abstract}
  This work presents a novel, black-box software-based countermeasure against physical attacks including power side-channel and fault-injection attacks. The approach uses the concept of random self-reducibility and self-correctness to add randomness and redundancy in the execution for protection. Our approach is at the operation level, is not algorithm-specific, and thus, can be applied for protecting a wide range of algorithms. The countermeasure is empirically evaluated against attacks over operations like modular exponentiation, modular multiplication, polynomial multiplication, and number theoretic transforms. An end-to-end implementation of this countermeasure is demonstrated for RSA-CRT signature algorithm and Kyber Key Generation public key cryptosystems. The countermeasure reduced the power side-channel leakage by two orders of magnitude, to an acceptably secure level in TVLA analysis. For fault injection, the countermeasure reduces the number of faults to 95.4\% in average.
\end{abstract}

\begin{IEEEkeywords}
Random Self-Reducibility, Fault Injection Attacks, Power Side-Channel Attacks, Countermeasure, NTT, PQC, RSA-CRT, Randomly Testable Functions
\end{IEEEkeywords}

\section{Introduction}\label{sec:introduction}

Smart devices and IoT devices with sensors, processing capability, and actuators are becoming ubiquitous today in consumer electronics, healthcare, manufacturing, etc. These devices often collect sensitive or security-critical information and need to be protected. However, when deployed in the field, such devices are vulnerable to physical attackers who can have direct access to the devices.

Physical attacks can be categorized as passive attacks or active attacks. In passive attacks, such as \textit{Side-Channel Attacks (SCA)}, the attackers do not tamper with the execution, but can collect power traces, electromagnetic (EM) field traces, or traces of acoustic signals, and analyze the signals to learn information that is processed on the device. In active attacks, such as \textit{Fault Injection (FI) attacks}, the attackers can inject faults through a voltage glitch, clock glitch, EM field, or laser to cause a malfunction in the processing unit or memory to tamper with the execution to obtain desired results. It has been shown that both types of physical attacks have been able to break cryptography implementations to leak secret keys, for example~\cite{bozzato2019shaping,endo2011chip,schmidt2007optical,selmke2016attack}.

Even though the assumptions on the attacker's capability are similar for SCA and FI, the existing mitigation techniques treat the two types of attacks separately.  For side-channel attacks, the mitigation techniques usually use randomness or noise to decouple the signal observable by the attacker from the data value~\cite{kocher2011introduction, spreitzer2017systematic}. 
For fault injection attacks, there are typically two solutions: one is attack detection and one is to have redundancy in the execution for error correction. The detection will detect when the execution has abnormal behavior, and then handle it as an exception. The error correction uses redundancy in the execution and uses the redundancy to correct execution error if there is~\cite{patrick2017lightweight}.
However, when we consider both SCA and FI attacks in the same system, separate mitigation for the two does not protect both attacks efficiently. For example, existing work~\cite{cojocar2018instruction} showed that instruction duplication as a fault tolerance mechanism amplifies the information leakage through side channels. Detection methods such as full, partial, encrypt-decrypt duplication \& comparison of a cipher~\cite{lomne2012need} produce repetitions of intermediate values that are exploitable by the side-channel adversary.

In this work, we propose a joint solution for both SCA and FI attacks. With a combination of random obfuscation using the Random Self-Reducibililty (RSR) property and redundancy for error correction, our proposed countermeasure is particularly effective against FI outperforming traditional redundancy-based methods. The randomness disrupts the attacker's observation of the statistics in fault attacks, thereby nullifying the effectiveness of statistical analysis as a tool for security compromise. This aspect is crucial in the face of increasingly sophisticated FI analysis techniques. In addition to its effectiveness against FI, the countermeasure also resists SCA, by rendering power consumption variations less useful to attackers. The countermeasure significantly enhances system security, particularly in environments where physical attacks are prevalent.

Another drawback of current mitigation techniques is that most existing work focuses on a certain implementation of a cryptographic algorithm, and to adopt the protection from one implementation to another needs redoing the security analysis process and redoing the implementation.

The proposed countermeasure offers significant benefits as a black box \textit{operation-level} solution {to both SCA and FI attacks}, and it is independent of the target algorithm being protected. This means there is no need for detailed knowledge of the implementation. \textit{The basis for the solution is to implement protection at low-level of operations such as modular exponentiation, modular multiplication, polynomial multiplication, and number theoretic transforms.} Also, we assume a generic fault model, and thus, there is no special fault profiling of a targeted device necessary.
Therefore, the proposed protection techniques can be applied directly in software without extensive system-specific adjustments.
In our evaluation, we showcase how the proposal protection techniques can be adopted to protect two different cryptosystems.

Our protection requires a small number of steps to implement. It can be implemented at C or high-level and is independent of the compiler or underlying architecture; assuming the compiler. First, target software is identified. Second, we locate low-level operations such as modular exponentiation, modular multiplication, polynomial multiplication, or number theoretic transforms. These operations can be protected with the idea of Random Self-Reducibility (RSR). Each instance of the low-level operation is replaced with an equivalent RSR operation. Each RSR operation requires querying a randomness source and then executing the low-level operations multiple times with original input values modified with the random values. Typically, multiple RSR operations are instantiated and majority voting is performed on the output of RSR operations. Because the protection works at the low-level operations such as modular exponentiation, modular multiplication, polynomial multiplication, or number theoretic transforms, it is independent of the higher-level algorithm or application. Since it does not rely on any hardware tricks, it is independent of the architecture and agnostic to the underlying compiler.

Our protection can be applied to any program or algorithm that uses modular exponentiation, modular multiplication, polynomial
multiplication, and number theoretic transforms to process secret or sensitive information. This encompasses major cryptogrpahic algorithms from ElGamal~\cite{elgamal1985public} and RSA~\cite{rivest1978method} to post-quantum cryptography such as Kyber~\cite{kyber} and Dilithium~\cite{ducas2018crystals}. In our evaluation, we show how our protection can be applied to RSA-CRT and Kyber's Key Generation algorithms.
Our contributions are summarized as follows:
\begin{itemize}[leftmargin=*]
  \item We propose a new software-based, combined countermeasure against power side-channel (Section~\ref{sec:randomization-against-power-side-channels}) and fault injection (Section~\ref{sec:self-correctness-against-fault-injections}) attacks, by randomizing the intermediate values of the computation using the notion of random self-reducibility (Section~\ref{sec:countermeasure}).
  \item We formalize the security of the countermeasure in relation to an attacker's fault injection capability, parameterize it, and quantify its effectiveness against fault-injection attacks, as detailed in Section~\ref{sec:security-guarantee}.
  \item End-to-end implementation of the countermeasure for RSA-CRT and Kyber's Key Generation public key cryptosystems (Section~\ref{sec:implementation}).
  \item Emprical evaluation of the countermeasure against power side-channel and fault-injection attacks over modular exponentiation, modular multiplication operations, polynomial multiplication, number theoretic transform operations, RSA-CRT, and Kyber's Key Generation (Section~\ref{sec:experiments}).
\end{itemize}

\section{Background}\label{sec:background}
In this section, we provide a brief overview on side-channels and fault injections as physical attacks.

\subsection{Power Side Channels}\label{sec:power-side-channels}

It is a well-known fact that the power consumption during certain stages of a cryptographic algorithm exhibits a strong correlation with the Hamming weight of its underlying variables, i.e., Hamming weight leakage model~\cite{platypus, CPA, mdpi:2020}.
This phenomenon has been widely exploited in the cryptographic literature in various attacks targeting a broad range of schemes, particularly post-quantum cryptographic implementations~\cite{2021:HOST, 2021:1307, 2020:1559, 2020:992, 2020:368, 2020:549, 2021:790, 2021:874, 2021:858, 2021:104, 2021:101}.
Therefore, we use the Hamming weight leakage model in the evaluation of the robustness of the countermeasure.

The Hamming weight leakage model assumes that the Hamming weight of the operands is strongly correlated with the power consumption. 
Each bit flip requires one or more voltage transitions from 0 to high (or vice versa). Different data values typically entail differing numbers of bit flips and therefore produce distinct power traces~\cite{power-analysis}.
Therefore, any circuit not explicitly designed to be resistant to power attacks has data-dependent power consumption. However, in a complex circuit, the differences can be so slight that they are difficult to distinguish from a {single trace}, particularly if an attacker's sampling rate is limited~\cite{platypus,hertzbleed}.
Therefore, it is necessary to use statistical techniques across multiple power traces~\cite{platypus}.

Test Vector Leakage Assessment (TVLA).~\cite{tvla} identifies if two sets of side channel measurements are {\itshape distinguishable} by computing the Welch's t-test for the two sets of measurements. It is being used in the literature to confirm the {\itshape presence} or {\itshape absence} of side-channel leakages for power traces, and has become the de facto standard in the evaluation of side-channel measurements~\cite{mdpi:2020, 2015:207, 2013:298, 2017:138, 2017:624}.
In side-channel analysis, the recommended thresholds for t-values are specifically tailored to detect potential information leakage in cryptographic systems. A t-value threshold of $\pm 4.5$ or $\pm 5$ is often considered in side-channel analysis. This threshold corresponds to a very high confidence level, rejecting the null hypothesis with a confidence greater than 99.999\% for a significantly large number of measurements. The null hypothesis typically being that all samples are drawn from the same distribution, a t-value outside this range indicates distinguishable distributions of the two sets and thus the existence of side-channel leakage \cite{socha2022comprehensive}. The choice of these thresholds is influenced by the need to balance the risk of false positives (incorrectly identifying information leakage when there is none) against the risk of false negatives (failing to detect actual information leakage). 

The Sum of Squared pairwise T-differences (SOST)~\cite{template} is a technique for identifying Points of Interest (PoIs) in side-channel analysis. It is particularly useful in scenarios where there are many data points (like traces in a cryptographic system), and you want to identify specific points in these traces that show significant variation based on different conditions or inputs.

\subsection{Fault Injection Attacks}\label{sec:fault-injection}
In the real world, there is a possibility that the devices will malfunction or be damaged, resulting in generating the error output, and we may ignore it. However, if the attacker intentionally induced the fault during the device operation, e.g., cryptographic calculation, he or she can recover the secret by analyzing the original and fault outputs. Most of the classical cryptographic algorithms can be attacked by fault injection attacks. For instance, the first fault attack research~\cite{boneh1997importance} was on the RSA implementation using the Chinese Remainder Theorem (CRT), which is the most common implementation of RAS used to secure communication. In this case, the attacker can recover the secret with only one faulty RSA-CRT signature. Moreover, in~\cite{mus2023jolt}, {Mus et al.} provide the fault attack method, which can attack El-Gamal or elliptic-curve (ECC) based signature, such as Schnorr signature and ECDSA, via Rowhammer (a software technique used to induce the fault in memory). Not only the public key cryptosystem but also the symmetric key cryptosystem are vulnerable to fault injection attacks. In~\cite{piret2003differential}, {Piret et al.} develop the fault attack method against substitution-permutation network (SPN) structures cryptographic algorithm, such as AES or KHAZAD. Even the post-quantum cryptographic algorithms~\cite{ravi2019exploiting}, which can protect against quantum computing, can be vulnerable to fault attacks. Therefore, it is necessary to have efficient FI attack protections that can be easily deployed.

The injected fault can, in principle, have an impact on any stage of the fetch-decode-execute cycle performed for each instruction~\cite{korak2014effects, zussa2013power}. Additionally, any optimizations implemented by the CPU, such as pipelining~\cite{yuce2015improving}, add to the complexity of executing a single instruction. Therefore, it is typically unknown what exactly goes wrong within the CPU when its behavior is changed due to fault injection, whereas the modified behavior itself is easier to measure. We consider a generic fault model, likely applicable to a wide range of targets, where a variable amount of bits in the instruction are flipped as a result of fault injection. Two types of behavior are possible using this fault model:
\begin{enumerate*}
  \item \textit{Instruction corruption}: the original instruction is modified into an instruction that has an impact on the behavior of the device. In practice, it may modify the instruction to any other instruction supported by the architecture.
  \item \textit{Instruction skipping}: effectively a subset of instruction corruption. The original instruction is corrupted into an instruction that does not have an impact on the behavior of the device. The resulting instruction does not change the execution flow or any state that is used later on.
\end{enumerate*}

Invocation of specific behavior is not a trivial task, as the low level control required to do this is often limited. However, it is possible to identify the more probable results while assuming that bit flips affecting single or all bits are more likely than complex patterns of bit flips~\cite{timmers2016controlling}.

On embedded processors, a fault model in which an attacker can skip an assembly instruction or equivalently replace it by a \texttt{nop} has been observed on several architectures and for several fault injection means~\cite{moro2014formal}.
Moro et al. in~\cite{moro2014experimental} assume that the effect of the injected fault on a 32-bit microcontroller leads to an instruction skip. Moro et al.~\cite{moro2014formal} and Barenghi et al.~\cite{barenghi2012fault} have proposed implementations of the \textit{Instruction Redundancy} technique as a countermeasure against this fault model.
Instruction skips correspond to specific cases of instruction replacements: replacing an instruction with another one that does not affect any useful register has the same effect as a \texttt{nop} replacement and so is equivalent to an instruction skip.

\section{Threat Model}\label{sec:threat-model}
In our threat model, we consider an attacker with physical access to a device, capable of injecting faults such as voltage glitches during the computation of a critical function like the number theoretic transform (see Section~\ref{sec:ntt}). These faults can corrupt or skip instructions (see Section~\ref{sec:fault-injection}) and happen anywhere multiple times but does not crash the device. Furthermore, the model permits the attacker to perform basic power side-channel analysis, collecting power trace samples. By correlating data-dependent power consumption with the Hamming weight leakage model, the attacker can expose vulnerabilities in cryptographic computations. This underscores the crucial need for robust defenses against both fault injection and side-channel attacks.

\section{Preliminaries}\label{sec:preliminaries}
We use the notion of {\it random self-reducibility}~\cite{blum1990self, rubinfeld1994robust} to develop a new software-based countermeasure against fault-injection attacks and simple power side-channel attacks. Therefore, in this section, we provide the necessary background on random self-reducibility. Since we apply our countermeasure to number-theoretic operations, we also provide the necessary background on number theoretic transforms.

\subsection{Notation}
The $\tilde{x}$ notation is used to represent a specific realization of a random variable (i.e., a specific value that the random variable takes on).
Let $\Pr_{x \in X}[\cdot]$ denote the probability of the event in the enclosed expression when $x$ is uniformly chosen from $X$.
We assume the domain and range of the function are the same set, usually named as $\mathbb{D}$, but the formalization can be expanded to accommodate multivariate functions and heterogeneous domains and ranges.

Let $q$ be a prime number, and the field of integers modulo $q$ be denoted as $\mathbb{Z}_q$. Schemes such as Kyber and Dilithium operate over polynomials in polynomial rings. The polynomial ring $\mathbb{Z}_q[x] / \phi(x)$ is denoted as $R_q$ where $\phi(x)=x^n+1$ is a cyclotomic polynomial with $n$ being a power of 2.
Multiplication of polynomials $a, b \in R_q$ is denoted as $c=a \cdot b \in R_q$.
Pointwise/Coefficientwise multiplication of two polynomials $a, b \in R_q$ is denoted as $c=a \circ b \in R_q$, which means that each of the coefficients of polynomial $a$ multiplies the coefficients of $b$ with the same index.
The NTT representation of a polynomial $a \in R_q$ is denoted as $\hat{a} \in R_q$.

\subsection{Random Self-Reducibility}\label{sec:random-self-reducibility}

Informally, a function $f$ is random-self-reducible if the evaluation of $f$ at any given instance $x$ can be reduced in polynomial time to the evaluation of $f$ at one or more random instances.

\begin{definition} [Random Self-Reducibililty (RSR)~\cite{blum1990self,rubinfeld1994robust}]\label{def:random-self-reducibility}
  Let $x \in \mathbb{D}$ and $c>1$ be an integer. We say that $f$ is $c$-random self-reducible if $f$ can be computed at any particular input $x$ via:
  \begin{equation}\label{eq:random-self-reducibility}
    F\left[f(x), f\left(a_1\right), \ldots, f\left(a_k\right), a_1, \ldots, a_k\right] = 0
  \end{equation}
  where $F$ can be computed asymptotically faster than $f$ and the $a_i$'s are uniformly distributed, although not necessarily independent; e.g., given the value of $a_1$ it is not necessary that $a_2$ be randomly distributed in $\mathbb{D}$. This notion of random self-reducibility is somewhat different than other definitions given by~\cite{abadi1987hiding, feigenbaum1993random, blum2019generate}, where the requirement on $F$ is that it be computable in polynomial time. 
\end{definition}

Another similar definition was made by Lipton~\cite{lipton1991new}. Suppose that we wish to compute the trivial identity function $f(x) = x$, and let $P$ be a program that computes $f(x)$. We can construct from $P$ another program $P'$
with the property
that it can compute $f(x)$ correctly at an arbitrary point $x$ provided that one can compute it at a number of random points.
Consider the following program $P'(x) \stackrel{\Delta}{=} \tilde{r}:=\text{random}(); \text{return } P(x+\tilde{r})-P(\tilde{r})$. $P'$ can compute $P(x)$ with inputs $x+\tilde{r}$ and $\tilde{r}$.

It is shown by Blum et al.~\cite{blum1990self} that self-correctors exist for any function that is \textit{random self-reducible}. A \textit{self-corrector} for $f$ takes a program $P$ that is correct on most inputs and turns it into a program that is correct on every input with high probability.

\subsection{Arithmetic Secret Sharing}

Privacy-preserving computing allows multiple parties to evaluate a function while keeping the inputs private and revealing only the output of the function and nothing else.  One popular approach to outsourcing sensitive workloads to untrusted workers is to use arithmetic secret sharing~\cite{damgaard2012multiparty,mohassel2018aby3}. It splits a secret into multiple shares, distributing them across various workers. Each worker processes their respective share locally. Assuming the workers will not collude, it is information-theoretically impossible for each worker to recover the secret from its share~\cite{xiong2022secndp}.

In standard arithmetic secret sharing, the client aims to compute \( f(x, y) = ax + by = z \), with the property that \( f(x,y) = f(x_1, y_1) + f(x_2, y_2) \), where \( x_1 \) and \( y_1 \) are randomly chosen. 
Let $\mathbb{Z}\left(2^{w_e}\right)$ denote the integer ring of size $2^{w_e}$. The shares are constructed such that the sum of all shares is equal to the original secret value $x \in$ $\mathbb{Z}\left(2^{w_e}\right)$.
The client then delegates the computation to workers (untrusted entities). These workers independently calculate \( f(x_1, y_1) \) and \( f(x_2, y_2) \), then relay their results back to the client. The client derives \( f(x, y) \) using these partial results from the untrusted workers. The randomness in the shares is crucial for our power side-channel countermeasure. While arithmetic secret sharing is based on the linearity of addition and multiplication over integers, our approach utilizes Random Self-Reducible properties, some of which may not necessarily be linear. Moreover, to counteract fault injection attacks, our algorithm must produce accurate results despite faults. We achieve this in our countermeasure by repeating the computation \( n \) times and choosing the majority of the responses.

\section{Overview of Our Countermeasure}\label{sec:countermeasure}

The foundational works of Blum et al.~\cite{blum1990self} and Lipton~\cite{lipton1991new} on testing have significantly influenced our approach to developing countermeasures. We have incorporated the concept of self-correctness to safeguard against fault-injection attacks, and the principles of random self-reducibility and randomly-testable functions to defend against power side-channel attacks.
These notions are investigated and applied as a countermeasure against physical attacks in the literature.

At the heart of this method is the generic, randomized Algorithm~\ref{alg:psca-countermeasure}, which is founded on the principle described in Definition~\ref{def:random-self-reducibility}. Additionally, Algorithm~\ref{alg:fia-countermeasure} boosts the effectiveness of the randomized Algorithm~\ref{alg:psca-countermeasure} through majority voting and probability amplification~\cite{sinclair2011randomness}.

We observed that instance hiding can be also used against physical attacks, such as power side-channel and fault-injection attacks
by randomizing the intermediate values of the computation. In this way, attackers won't be able to correlate the side-channel leakage with the intermediate values of the computation (see~\ref{fig:countermeasure-motivation}).
For example, secrets in ElGamal Decryption~\cite{elgamal1985public} (see Algorithm~\ref{alg:elgamal-decryption}) can be protected end-to-end using instance hiding, but instead of using arithmetic secret sharing, we use random self-reducible properties.

\begin{figure*}
  \centering
  \includegraphics[width=.9\textwidth]{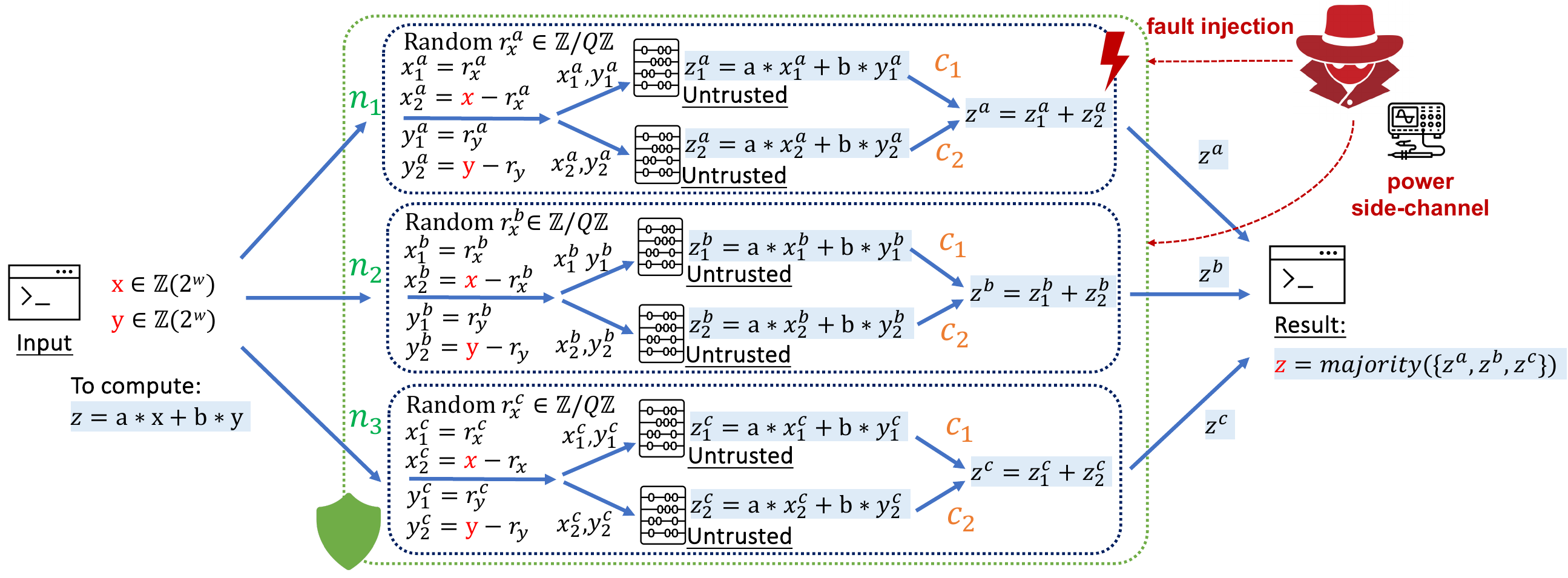}
\caption{
    Motivation: In standard arithmetic secret sharing, the client aims to compute \( f(x, y) = ax + by = z \), with the property that \( f(x,y) = f(x_1, y_1) + f(x_2, y_2) \), where \( x_1 \) and \( x_2 \) are randomly chosen (annotated by \( c_1, c_2 \)). The client delegates the computation to workers (untrusted entities). These workers independently calculate \( f(x_1, y_1) \) and \( f(x_2, y_2) \), then relay their results back to the client. The client derives \( f(x, y) \) using these partial results from the untrusted workers. The randomness in the shares is crucial for our power side-channel countermeasure. While arithmetic secret sharing is based on the linearity of addition and multiplication over integers (i.e., \( f(x+y)=f(x)+f(y) \)), our approach utilizes Random Self-Reducible properties, some of which may not necessarily be linear. Moreover, to counteract fault injection attacks, our algorithm must produce accurate results despite faults. We achieve this in our countermeasure by repeating the computation \( n \) times and choosing the majority of the responses.
}

  \label{fig:countermeasure-motivation}
\end{figure*}

\begin{algorithm}[ht]
  \caption{ElGamal Decryption}\label{alg:elgamal-decryption}
  \small
  \LinesNumbered  
  \SetKwInOut{Input}{Input}
  \SetKwInOut{Output}{Output}
  \Input{Ciphertexts: $c_1$, $c_2$, Secret Key: $x$}
  \Output{Decrypted Message: $m$}
  Calculate $s := c_1^x \bmod R$\;
  Calculate $l := s^{-1} \bmod R$\;
  Calculate $m := c_2 \cdot l \bmod R$\;
  \Return $m$
\end{algorithm}

In this algorithm, $c_1$ and $c_2$ are the ciphertexts, $x$ is the secret key, and $m$ is the decrypted message. The operation $c_1^x$ represents raising $c_1$ to the power of $x$, and $s^{-1}$ represents the modular multiplicative inverse of $s$ (i.e., $s^{(-1)} \bmod{R}$, where $R$ is the prime modulus used in the ElGamal encryption scheme). The result of the decryption, $m$, is obtained by multiplying $c_2$ with the modular inverse of $s$, denoted as $l$ in this algorithm.

For instance, we can protect the modular exponentiation function, {\small$ f(a, x, R) = a^x \bmod R$}, in ElGamal decryption using {\small $P(a, x, R) = P(a, \tilde{x}_1, R) \cdot_R P(a, \tilde{x}_2, R)$}, and modular multiplication, {\small $f(x, y, R)=x \cdot_R y$}, using {\small $P(x, y, R) =  P(\tilde{x}_1, \tilde{y}_1, R) +_R P(\tilde{x}_2, \tilde{y}_1, R) +_R P(\tilde{x}_1, \tilde{y}_2, R) +_R P(\tilde{x}_2, \tilde{y}_2, R)$}. In these equalities shares should be selected to make {\small $x = x_1 +_R y_1$} and {\small $y = y_1 +_R y_2$}.

\begin{table}[ht]
  \caption{Operations used in ElGamal Decryption: Deshpande et al.~\cite{deshpande2021modular} use two different ways to implement modular exponentation ($\mathrm{D_1}$ and $\mathrm{D_1}$) in ElGamal decryption.
  }\label{tab:elgamal}
  \centering
  \small
  \begin{NiceTabular}{|l|l|l|}
    \midrule
    \multirow{1}{*}{} & \multirow{1}{*}{\bf Method-$\mathbf{D_1}$} & \multirow{1}{*}{\bf Method-$\mathbf{D_2}$}  \\\midrule
    $s:=c_1^x$        & Modular exponentiation                     & Modular exponentiation                      \\
    $l:=s^{-1}$       & Fermat's method~\cite{weil2013basic}       & Fast GCD algorithm~\cite{bernstein2019fast} \\
    $m:=c_2.l$        & Modular multiplication                     & Modular multiplication                      \\
    \midrule
  \end{NiceTabular}
\end{table}

It is also hard to implement countermeasures for different implementations of the same mathematical function.
Table~\ref{tab:elgamal} presents two distinct methodologies for implementing the ElGamal decryption algorithm. Method $D_1$ employs Fermat's method (Algorithm~\ref{alg:modular-exponentiation} in Appendix) for the modular inverse calculation~\cite{weil2013basic}, while $D_2$ utilizes a sophisticated, constant-time modular inverse implementation recently developed by Bernstein and Yang~\cite{bernstein2019fast} (Algorithm~\ref{alg:fast-gcd} in Appendix). Importantly, our countermeasure technique, applicable for both modular exponentiation (refer to Section~\ref{sec:modular-exponentiation-1}) and modular multiplication (refer to Section~\ref{sec:modular-multiplication-1}), is compatible with either method regardless of the complexity of their respective implementations.

\subsection{RSR against Power Side Channels}\label{sec:randomization-against-power-side-channels}

Consider a correct program $P$ that has an associated random self-reducible property, which takes the form of a functional equation $p$. This property is deemed satisfied if, in the equation $p$, we can substitute $P$ for the function $f$ and the equation remains true.

\begin{algorithm}
  \caption{$c$-secure-countermeasure PSCA $(P, x, c)$.}\label{alg:psca-countermeasure}
  \small
  \LinesNumbered  

  \SetKwInOut{Input}{Input}
  \SetKwInOut{Output}{Output}
  \Input{Program: $P$, Sensitive input: $x$, Security: $c$}
  \Output{$P(x)$}

  Randomly split $a_1, \ldots, a_c$ based on $x$. \;
  \For{$i=1, \ldots, c$}{
    $\alpha_i \gets P(a_i)$
  }
  \Return $F[x, a_1, \ldots, a_c, \alpha_1, \ldots, \alpha_c]$
\end{algorithm}

Generic $c$-secure-countermeasure PSCA $(P, x, c)$ defined Algorithm~\ref{alg:psca-countermeasure} takes a program $P$, a sensitive input $x$, and a security parameter $c$. The algorithm randomly splits $x$ into $c$ shares $a_1, \ldots, a_c$ such that $x=a_1+\cdots+a_c$, and calls $P$ on each share $a_i$ to obtain $\alpha_i = P(a_i)$. Finally, the algorithm returns the result of the function $F$ on $x, a_1, \ldots, a_c, \alpha_1, \ldots, \alpha_c$. The function basis $F$ is defined based on the random self-reducible property of the function $f$ that $P$ implements (cf. Definition~\ref{def:random-self-reducibility}).

To ensure minimum security, splitting the secret input into two shares would suffice. However, for enhanced security, the secret input can be divided into additional shares. It's important to view the security parameter $c$ as an invocation to $P$, especially in the context of bivariate functions, rather than merely the number of shares.

\vspace*{1em}
\noindent
{\centering\fbox{%
    \parbox{.96\columnwidth}{%
      \textbf{Masking with Random Self-Reducibility}. If a cryptographic operation has a random self-reducible property, then it is possible to protect it against power side-channel attacks by masking with arithmetic secret sharing.
    }%
  }
}
\vskip2pt

\subsection{Self-Correctness against Fault Injections}\label{sec:self-correctness-against-fault-injections}

Fault injection attacks rely on obtaining a faulty output or correlating the faulty output with the input or secret-dependent intermediate values.
By introducing redundancy and majority voting, we can obtain correct results even if some results are incorrect due to injected faults.

In Algorithm~\ref{alg:fia-countermeasure}, we show how to apply the fault injection countermeasure approach on top of the power side-channel countermeasure. To protect a program $P$ that implements a function $f$ having a random self-reducible property, the algorithm calls $P$'s $c$-secure-countermeasure $n$ times and returns the majority of the answers. The function $c$-secure-countermeasure takes a program $P$, a sensitive input $x$, and a security parameter $c$.

\begin{algorithm}[h]
  \caption{$n$-secure countermeasure FIA $(P, x, n, c)$.}\label{alg:fia-countermeasure}
  \small
  \LinesNumbered

  \SetKwInOut{Input}{Input}
  \SetKwInOut{Output}{Output}
  \Input{Program: $P$, Sensitive input: $x$, Security: $n, c$}
  \Output{$P(x)$}

  \For{$m=1, \ldots, n$}{
    $\text{answer}_m \gets$ call $c$-secure-countermeasure($P, x, c$)
  }
  \Return the majority in $\{\text{answer}_m\text{: } m=1, \ldots, n\}$
\end{algorithm}

Note that \( c \) and \( n \) are independent security parameters. The security parameter \( c \) represents the number of calls to the unprotected program used in the PSCA countermeasure, whereas \( n \) signifies the number of iterations in the FIA countermeasure. The security parameter \( n \) is associated with the attacker's capability to inject effective faults. Owing to redundancy, an increase in the security parameter \( c \) results in a decreased likelihood of the attacker successfully injecting a fault.

\begin{algorithm}[h]
  \caption{($c$, $n$)-secure mod operation $(P, R, x, c, n)$.}\label{alg:mod-operation}
  \small
  \setlength{\algomargin}{2pt}
  \LinesNumbered
  \SetKwInOut{Input}{Input}
  \SetKwInOut{Output}{Output}
  \Input{Program: $P$, Sensitive input: $x$, Security: $n, c$}
  \Output{$P(x)$}

  \For{$m=1, \ldots, n$}{
    $x_1, x_2, \ldots, x_c$ $\getsdollar $ Random-Split($R 2^n, x$) \\
    $\text{answer}_m \gets P(x_1, R) +_R P(x_2, R) \ldots +_R P(x_c, R)$
  }
  \Return the majority in $\{\text{answer}_m\text{: } m=1, \ldots, n\}$
\end{algorithm}

Algorithm~\ref{alg:mod-operation} presents an example of a combined and configurable countermeasure, effective against both PSCA and FIA, applied to the modular multiplication operation. In Line 2, the algorithm divides the input $x$ into $c$ shares $x_1, x_2, \ldots, x_c$, satisfying $x = x_1 + x_2 + \cdots + x_c$. The methodology for the random splitter algorithm is detailed in Section~\ref{sec:random-split-function}. Furthermore, the majority function, which essentially returns the most common answer, is described in Section~\ref{sec:majority-function}.

\vspace*{1em}
\noindent
{\centering\fbox{%
    \parbox{.96\columnwidth}{%
      \textbf{Self-Correctness with Majority Voting}. Fault injection attacks rely on faulty output. By majority voting, we can obtain correct results even if some results are incorrect.
    }%
  }
}
\vskip2pt

\subsection{n and attacker's probability of success}\label{sec:security-guarantee}

Fault injection occurs at the hardware level and is both challenging and unpredictable to control. When a successful fault is induced, it transforms a previously correct victim program into an incorrect one. Consequently, the essence of a fault injection attack is its probabilistic nature. This concept is abstracted in terms of the attacker's probability of success in our work.

\begin{definition}[$\varepsilon$-fault tolerance]\label{def:epsilon-fault-tolerance}
  Let $\varepsilon$ be the upper bound on the attacker's probability of injecting a fault successfully at an unprotected program $P$ that correctly implements a function $f$. Say that the program $P$ is $\varepsilon$-fault tolerant for the function $f$ provided $P(x)=f(x)$ for at least $1-\varepsilon$ of any input $x$. We assume each fault injection is independent of the others:
  $\operatorname{Pr}_{fault}[P(x) \neq f(x)]<\varepsilon.$
\end{definition}

Algorithm~\ref{alg:psca-countermeasure} is a randomized algorithm and Algorithm~\ref{alg:fia-countermeasure} is also a randomized algorithm that repeats the computation $n$ times by calling Algorithm~\ref{alg:psca-countermeasure} and uses majority voting to pick the correct answer. Therefore, we can use Chernoff bounds~\cite{sinclair2011randomness} to show that the probability of getting the correct answer is at least $1-\delta$.

A simple and common use of Chernoff bounds is for "boosting" of randomized algorithms.
If one has an algorithm that outputs a guess that is the desired answer with probability $p>1/2$, then one can get a higher success rate by running the algorithm $n=\log (1/ \delta) 2 p /(p-1 / 2)^2$ times and outputting a guess that is output by more than $n/2$ runs of the algorithm. Assuming that these algorithm runs are independent, the probability that more than $n/2$ of the guesses is correct is equal to the probability that the sum of independent Bernoulli random variables $X_k$ that are 1 with probability $p$ is more than $n/2$. This can be shown to be at least $1-\delta$ via the multiplicative Chernoff bound ($\mu=n p$)~\cite{wiki:chernoff}: $\Pr\left[X>{n}/{2}\right] \geq 1-e^{-n(p-1 / 2)^2 /(2 p)} \geq 1-\delta.$

\begin{theorem}[Derived from Theorem 3.1 in \cite{lipton1991new}]\label{thm:1}
  Suppose that $f$ is randomly self-reducible and that $P$ is $\varepsilon$-fault tolerant for the function $f$. Consider a $c$-secure countermeasure $\widetilde{C}(x)$ (Line 4 in Algorithm~\ref{alg:psca-countermeasure}):
  $$
    \textbf{return } F[x, a_1, \ldots, a_c, P(a_1), \ldots, P(a_c)]
  $$
  Then, for any $x, \widetilde{C}(x)$ is equal to $f(x)$ with probability at least $1-\varepsilon c$.
\end{theorem}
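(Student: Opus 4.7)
The plan is to reduce the correctness of $\widetilde{C}(x)$ to the correctness of $P$ on the $c$ random queries $a_1, \ldots, a_c$, and then apply a union bound. The random self-reducibility of $f$ (Definition~\ref{def:random-self-reducibility}) guarantees that the functional equation $F[f(x), f(a_1), \ldots, f(a_c), a_1, \ldots, a_c] = 0$ holds identically. Solving this equation for $f(x)$ in terms of the other arguments (which is exactly how $F$ is used in Line~4 of Algorithm~\ref{alg:psca-countermeasure}) yields the key implication: whenever $P(a_i) = f(a_i)$ holds for \emph{every} $i \in \{1, \ldots, c\}$, we must have $\widetilde{C}(x) = f(x)$. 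So the event ``$\widetilde{C}$ is wrong'' is contained in the event ``some query is wrong.''

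Next I would bound the probability of a single bad query. By Definition~\ref{def:random-self-reducibility}, each $a_i$ is marginally uniformly distributed on $\mathbb{D}$ (even though the $a_i$'s may be mutually dependent). By the $\varepsilon$-fault tolerance of $P$ (Definition~\ref{def:epsilon-fault-tolerance}), at most an $\varepsilon$-fraction of inputs in $\mathbb{D}$ satisfy $P(a) \neq f(a)$; hence for each individual $i$,
\[
\Pr[P(a_i) \neq f(a_i)] \;<\; \varepsilon.
\]
Crucially, this step does not require joint independence of the $a_i$'s, only uniformity of each marginal, which is exactly what the definition provides.

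Then I would finish with a union bound:
\[
\Pr[\widetilde{C}(x) \neq f(x)] \;\le\; \Pr\!\left[\,\exists\, i \in \{1,\ldots,c\} : P(a_i) \neq f(a_i)\,\right] \;\le\; \sum_{i=1}^{c} \Pr[P(a_i) \neq f(a_i)] \;<\; \varepsilon c,
\]
so $\Pr[\widetilde{C}(x) = f(x)] \ge 1 - \varepsilon c$, as claimed. Note that the randomness here is taken over the internal coins that produce the split $a_1, \ldots, a_c$ and over the fault process, which are assumed independent per Definition~\ref{def:epsilon-fault-tolerance}.

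The main subtlety, rather than a genuine obstacle, is being careful about \emph{which} probability space we are working in and \emph{which} independence is being invoked. We do not need the $a_i$'s to be jointly independent (the definition explicitly warns they need not be), nor do we need independence between the query distribution and the fault event on a fixed input; we only need (i) uniform marginals of the $a_i$'s, so that the fault-tolerance fraction translates directly into a per-query failure probability, and (ii) the tolerance bound itself, which is a worst-case statement over inputs. The union bound then absorbs all dependencies at the cost of the factor $c$, giving the stated $1 - \varepsilon c$ guarantee without any extra hypothesis. This also cleanly explains why the bound degrades linearly in $c$: each additional share adds one more query that the adversary could corrupt.
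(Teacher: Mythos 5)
Your proposal is correct and follows essentially the same route as the paper's proof: reduce correctness of $\widetilde{C}(x)$ to the event that $P(a_i)=f(a_i)$ for every $i$, then bound the failure probability by $\varepsilon c$ via a union bound over the $c$ queries. You simply make explicit the two steps the paper leaves implicit (uniform marginals of the $a_i$'s yielding the per-query bound, and the union bound absorbing any dependence among the queries), which is a faithful and slightly more careful rendering of the same argument.
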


\begin{proof}
  Fix an input $x$. Clearly, the probability that $\widetilde{C}(x)$ is correct is at least the probability that for each $i, P(a_i)=f(a_i)$. This follows since $f$ is random self-reducible with respect to the number of calls to $P$ is done. It therefore follows that $\widetilde{C}$ returns correct results at least $1-\varepsilon c$ of the time.
\end{proof}

In the next sections, we will present a number of examples of $c$-secure countermeasures whose security parameter is mostly $c=2$. Thus, for these functions, Theorem~\ref{thm:1} says that, for $\varepsilon$ equal to $1/100$, the probability that $\widetilde{C}$ returns correct results is at least 0.98. We can amplify the probability of success by repeating the computation $n$ times and using majority voting. In addition, we can select a bigger $n$ by adjusting $\delta$ as the confidence parameter:

\vspace*{1em}
\noindent
{\centering\fbox{%
    \parbox{.96\columnwidth}{%
      \textbf{Lower bound for $\bf{n}$}. The attacker's probability of success is $\varepsilon$, and for a $c$-secure countermeasure, the lower bound for $n$ is defined as:
      $n=\log (1 / \delta) 2 (1-\varepsilon c) /(\varepsilon c/2)^2$,
      where $\delta$ is the confidence parameter.
    }%
  }
}
\vskip4pt

Algorithm~\ref{alg:psca-countermeasure} makes calls to a program $P$ that implements a function $f$ having a random self-reducible property. However, we do not need to know the implementation of the function $f$, we just need to know the mathematical definition of the function $f$ to configure the Algorithm~\ref{alg:psca-countermeasure} and~\ref{alg:fia-countermeasure}.
Therefore, one further advantage of our countermeasure is that it follows ``black-box'' approach. The fault injection attacks are hardware attacks, and the attacker does not have access to the software implementation of the function. Therefore, the attacker can only observe the input and output of the function. By using the black-box approach, we basically make the countermeasure robust at the hardware level.

\vspace*{1em}
\noindent
{\centering\fbox{%
    \parbox{0.96\columnwidth}{%
      \textbf{Black-box}. If we replace the $f$ function with a program $P$ that computes the function $f$, then our countermeasure $\widetilde{C}$ access $P$ as a black-box and computes the function $f$ using the random self-reducible properties of $f$.
    }%
  }
}
\vskip2pt

\section{Implementation of Countermeasures}\label{sec:implementation-of-countermeasures}

Table~\ref{tab:finite-field-examples} lists all functions of some finite field operations and their corresponding random self-reducible properties. In this section, we examplify each function that we used in end-to-end experiments and show how to apply the countermeasure approach to protect the function. 

\begin{table*}[t]
  \small
  \centering
  \setlength\cellspacetoplimit{2pt}
  \setlength\cellspacebottomlimit{2pt}
  \caption{Random Self-Reducible Properties. $x$ and $y$ are integers and $p$ and $q$ are polynomial.
   } \label{tab:finite-field-examples}
  \begin{NiceTabular}{|Ol|Ol|Ol|}
    \toprule
    {\bf Program $P$}                       & {\bf Function $f$}             & {\bf Random Self-Reducible Property}                                                                                                                        \\\hline
    \hline Mod Operation                    & $f(x, R)=x \bmod R$            & $P(x, R) \gets P(\tilde{x}_1, R) +_R P(\tilde{x}_2, R)$                                                                                                     \\
    \hline Modular Multiplication           & $f(x, y, R)=x \cdot_R y$       & $P(x, y, R) \gets  P(\tilde{x}_1, \tilde{y}_1, R) +_R P(\tilde{x}_2, \tilde{y}_1, R) +_R P(\tilde{x}_1, \tilde{y}_2, R) +_R P(\tilde{x}_2, \tilde{y}_2, R)$ \\
    \hline Modular Exponentiation           & $f(a, x, R)=a^x \bmod R$       & $P(a, x, R) \gets P(a, \tilde{x}_1, R) \cdot_R P(a, \tilde{x}_2, R)$                                                                                        \\
    \hline Modular Inverse                  & $f(x, R) \cdot_R x=1$          & $P(x, R) \gets \tilde{w} \cdot_R P(x \cdot_R \tilde{w})$ where $P(\tilde{w}, R) \cdot_R \tilde{w} = 1$ and $P(x, R) \cdot_R x = 1$                          \\
    \hline
    \hline Polynomial Multiplication        & $f(p_x, q_x)=$ $p_x \cdot q_x$ & $P(p, q) \gets P(\tilde{p}_1, \tilde{q}_1)+P(\tilde{p}_2, \tilde{q}_1)+P(\tilde{p}_1, \tilde{q}_2)+P(\tilde{p}_2, \tilde{q}_2)$                             \\
    \hline Number Theoretic Transform  & $f(x_1, \ldots, x_n) = \cdots$ & $P(x_1, \ldots, x_n) \gets P(x_1+\tilde{r}_1, \ldots, x_n+\tilde{r}_n)-P(\tilde{r}_1, \ldots, \tilde{r}_n)$                                                 \\
    \hline
    \hline Integer Multiplication           & $f(x, y)=x \cdot y$            & $P(x, y) \gets  P(\tilde{x}_1, \tilde{y}_1) + P(\tilde{x}_1, \tilde{y}_2) + P(\tilde{x}_2, \tilde{y}_1) + P(\tilde{x}_2, \tilde{y_2})$                      \\
    \hline Integer Multiplication           & $f(x, y)=x \cdot y$            & $P(x, y) \gets P(x+\tilde{r}, y+\tilde{s})-P(\tilde{r}, y+\tilde{s})-P(x+\tilde{t}, \tilde{s})+P(\tilde{t}, \tilde{s})$                                     \\
    \hline Integer Division                 & $f(x, R)=x \div R$             & $\displaystyle P(x, y) \gets  P(x_1, R)+P(x_2, R) + P\left(P_{\text{mod}}(x_1, R)+P_{\text {mod }}(x_2, R), R\right)$                                       \\
    \hline
    \hline Matrix Multiplication            & $f(A, B) = A \times B$         & $P(A, B) \leftarrow P(\tilde{A}_1, \tilde{B}_1)+P(\tilde{A}_2, \tilde{B}_1)+P(\tilde{A}_1, \tilde{B}_2)+P(\tilde{A}_2, \tilde{B}_2)$                        \\
    \hline Matrix Inverse                   & $f(A) = A^{-1}$                & $P(A) \gets \tilde{R} \times P(A \times \tilde{R})$ where $A$ and $\tilde{R}$ are invertible $n$-by-$n$ matrices.                                           \\
    \hline Matrix Determinant               & $f(A) = \det A$                & $P(A) \gets P(\tilde{R}) / P(A \times \tilde{R})$ where $\tilde{R}$ is invertible.                                                                          \\
    \bottomrule
  \end{NiceTabular}
\end{table*}

\paragraph*{Random Split Function.}\label{sec:random-split-function}
A random splitter is used to divide the input \( x \) into \( c \) shares \( a_1, \ldots, a_c \) such that \( x = a_1 + \cdots + a_c \). Algorithm~\ref{alg:random-split} provides a possible implementation of the random splitter, accommodating an additional input that specifies the total number of shares.

\begin{algorithm}\label{alg:random-split}
  \caption{Random-Split$(m, x, c)$.}
  \LinesNumbered
  \small

  \KwIn{modulus: $m$, input value: $x$, \# of shares: $c$}
  \KwOut{an array of shares $a_1, a_2, \ldots, a_c$.}

  \BlankLine
  Initialize an array $s[1 \ldots c]$ and initialize $sum \gets 0$\;
  $i \gets 1$\;
  \For{$i$ \KwTo $c - 1$}{
    $s[i] \getsdollar \text{random integer in } \mathbb{Z}_m$\;
    $sum \gets sum + s[i]$
  }
  $s[c] \gets x - sum \pmod{m}$\;
  \Return{$s$}
\end{algorithm}

This algorithm ensures that the sum of all shares \( a_1, a_2, \ldots, a_c \) is congruent to the original input \( x \) modulo \( m \). This congruence condition is vital for maintaining the integrity of the split and ensuring that the original input can be accurately reconstructed from the shares.

\paragraph*{Majority Vote Function.}\label{sec:majority-function}
The function {majority()} selects the value that occurs most frequently; in case of a tie it selects the first such value. In practice, if the majority() function does not get all the same values, then it would at least ``log'' that some error has been detected. Here,  it is unlikely that the majority of the values are wrong.

\begin{algorithm} \label{alg:majority-element-1}
  \caption{Majority Vote Algorithm}
  \small
  \SetKwInOut{Input}{Input}
  \SetKwInOut{Output}{Output}
  \SetKw{Throw}{throw}
  \LinesNumbered
  \Input{A list of elements $a_1, a_2, \dots, a_n$}
  \Output{The majority element of the list}

  \BlankLine
  Initialize an element $m$ and a counter $i$ with $i = 0$;

  \BlankLine
  \For{$j \gets 1$ to $n$}{
    \lIf{$i = 0$}{
      $m \gets a_j$ and $i \gets 1$
    }\lElseIf{$m = a_j$}{
      $i \gets i + 1$
    }\lElse{
      $i \gets i - 1$
    }
  }
  \Return $m$
\end{algorithm}

The software implementation of all ($c, n$)-secure programs use Boyer-Moore majority\footnote{\url{https://www.cs.utexas.edu/~moore/best-ideas/mjrty}} vote algorithm~\cite{boyer1991mjrty} as the majority function implementation. Algorithm~\ref{alg:majority-element-1} maintains in its local variables a sequence element $m$ and a counter $i$, with the counter initially zero. It then processes the elements of the sequence, one at a time. When processing an element $a_j$, if the counter is zero, the algorithm stores $a_j$ as its remembered sequence element and sets the counter to one. Otherwise, it compares $a_j$ to the stored element and either increments the counter (if they are equal) or decrements the counter (otherwise). At the end of this process, if the sequence has a majority, it will be the element stored by the algorithm.
Algorithm~\ref{alg:majority-voting-self-correcting} is a self-correcting version of the majority vote algorithm. It calls the majority function $n$ times and at each iteration, it shuffles the input list to ensure that the input list is random at each iteration against simple power side-channel attacks. The algorithm returns the majority element of the list, if it exists. 

\begin{algorithm}
  \caption{Protected Majority Vote ($\ell, \operatorname{majority}, n$)}
  \label{alg:majority-voting-self-correcting}
  \small
  \SetKwInOut{Input}{Input}
  \SetKwInOut{Output}{Output}
  \Input{Votes $\ell = a_1, a_2, \dots, a_n$, function $\operatorname{majority}$, and $n$}
  \Output{The majority element of the list, if it exists}
  \BlankLine
  $m \gets 1$ \\
  \For{$m$ \KwTo $n$}{
    $\ell_1 \getsdollar \operatorname{shuffle}(\ell)$ \\
    $\text{answer}_m \gets \operatorname{majority}(\ell_1)$\\
  }
  \lIf{$m \ne n$}{output "FAIL" and halt \Comment*[f]{verify loop completion}}
  \Return the majority in $\{\text{answer}_m\text{: } m=1, \ldots, n\}$
\end{algorithm}

The Fisher-Yates shuffle algorithm is used to shuffle the input list at each iteration~\cite{fisher1963statistical, durstenfeld1964algorithm}. It iterates through a sequence from the end to the beginning (or the other way) and for each location $i$, it swaps the value at $i$ with the value at a random target location $j$ at or before $i$.

\begin{algorithm}
  \caption{Fisher-Yates Shuffle}
  \small
  \SetKwInOut{Input}{Input}
  \SetKwInOut{Output}{Output}
  \Input{A list of elements $a_1, a_2, \dots, a_n$}
  \Output{A random permutation of the elements in the input list}
  \label{alg:fisher-yates-shuffle}

  \BlankLine
  \For{$i \gets n-1$ down to $1$}{
    Choose a random integer $j$ such that $0 \leq j \leq i$\;
    Swap $a_i$ and $a_j$\;
  }

  \Return the shuffled list \;
\end{algorithm}

Line 5 of Algorithm~\ref{alg:majority-voting-self-correcting} verifies that the loop has completed. This is a simple check to ensure that the loop has completed and that the algorithm has not been interrupted. This is a classical countermeasure against instruction skip type of fault injection attacks~\cite{timmers2016controlling}. In the rest of the countermeasures, we don't need to check because the self-correctness property is already guaranteed by the majority function. 

\paragraph*{Mod Function Countermeasure}\label{sec:mod-function-1}
We consider computing an integer $\bmod\,R$ for a positive number $R$. In this case, $f(x, R)=x \bmod R$. Algorithm~\ref{alg:mod-function} shows a pseudocode for the protected mod function with a security parameter of 2.

\begin{algorithm}\label{alg:mod-function}
  \caption{2-secure protected mod operation $(P, R, x)$}
  \small

  $x_1, x_2$ $\getsdollar $ Random-Split($R 2^n, x$)\;
  \Return $P(x_1, R) +_R P(x_2, R)$\;
\end{algorithm}

However, Algorithm~\ref{alg:mod-function-3} shows a version for a security parameter 3 in which we increase the security parameter by one by increasing the number of shares to three. All random self-reducible properties in Table~\ref{tab:finite-field-examples} are applicable to increase shares to $n$.

\begin{algorithm}\label{alg:mod-function-3}
  \caption{3-secure protected mod operation $(P, R, x)$.}
  \small
  $x_1, x_2, x_3$ $\getsdollar $ Random-Split($R 2^n, x$)\;
  \Return $P(x_1, R) +_R P(x_2, R) +_R P(x_3, R)$\;
\end{algorithm}

\paragraph*{Modular Multiplication Countermeasure}\label{sec:modular-multiplication-1}

We now consider multiplication of integers $\bmod\ R$ for a positive number $R$. In this case, $f(x, y, R)=x \cdot_R y$. Suppose that both $x$ and $y$ are in the range $\mathbb{Z}_{R 2^n}$ for some positive integer $n$. Algorithm~\ref{alg:mod-mult} shows a possible implementation for the protected modular multiplication with a $c$ security parameter set to 2.

\begin{algorithm}\label{alg:mod-mult}
  \caption{2-secure mod. multiplication $(P, R, x, y)$}
  \small
  $x_1, x_2$ $\getsdollar$  Random-Split($R \times 2^n, x$) \;
  $y_1, y_2$ $\getsdollar$  Random-Split($R \times 2^n, y$) \;
  \Return $P(x_1, y_1, R) +_R P(x_2, y_1, R) +_R P(x_1, y_2, R) + P(x_2, y_2, R)$
\end{algorithm}

\paragraph*{Modular Exponentiation Countermeasure}\label{sec:modular-exponentiation-1}
We now consider exponentiation of integers $\bmod R$ for a positive number $R$. In this case, $f(a, x, R)=a^x \bmod R$. We restrict attention to the case when $\operatorname{gcd}(a, R)=1$ and when we know the factorization of $R$, and thus we can easily compute $\phi(R)$, where $\phi$ is Euler's function. Suppose that $x$ is in the range $\mathbb{Z}_{\phi(R) 2^n}$.

\begin{algorithm}\label{alg:mod-exp}
  \small
  \caption{2-secure mod. exponentiation $(P, R, a, x)$}
  $x_1, x_2$ $\getsdollar$ Random-Split($\phi(R)2^n, x$) \;
  \Return $\gets P(a, x_1, R) \cdot_R P(a, x_2, R)$  \Comment*[r]{calls Algo.~\ref{alg:mod-mult}}
\end{algorithm}

The modular exponentiation self-correcting program is very simple to code. The hardest operation to perform is the modular multiplication $P\left(a, x_1, R\right) \cdot_R$ $P\left(a, x_2, R\right)$.
The self-correcting program can compute this multiplication directly without using random self-reducible property, however, for extra protection, $2$-secure modular multiplication can be used (cf. Algorithm~\ref{alg:mod-mult}).

\paragraph*{Polynomial Multiplication Countermeasure}\label{sec:poly-mult}
We consider the multiplication of polynomials over a ring. Let $R^d[x]$ denote the set of polynomials of degree $d$ with coefficients from some ring $R$, and let $\mathbb{U}_{R^d[x] \times R^d[x]}$ be the uniform distribution on $R^d[x] \times R^d[x]$. In this case, $f(p(x), q(x))=$ $p(x) \cdot q(x)$, where $p, q \in R^d[x]$.

\begin{algorithm}
  \caption{2-secure polynomial multiplication $(P, p, q)$}\label{alg:poly-mult}
  \small
  Choose $p_1 \in_{\mathbb{U}} R^d[x]$ \Comment*[r]{random polynomial}
  Choose $q_1 \in_{\mathbb{U}} R^d[x]$ \Comment*[r]{random polynomial}
  $p_2 \gets p-p_1$\;
  $q_2 \gets q-q_1$\;
  \Return $P\left(p_1, q_1\right)+P\left(p_2, q_1\right)+P\left(p_1, q_2\right)+P\left(p_2, q_2\right)$\\
\end{algorithm}

\paragraph*{Number Theoretic Transforms}\label{sec:ntt}

Transforms used in signal processing such as the Fast Fourier Transform (FFT) or Number Theoritic Transform (NTT) or their inverse can be protected with our countermeasure. NTT over an $n$ point sequence is performed using the well-known butterfly network, which operates over several layers/stages. The atomic operation within the NTT computation is denoted as the butterfly operation. A butterfly operation takes as inputs $(a, b) \in \mathbb{Z}_q^2$ and a twiddle constant $w$, and produces outputs $(c, d) \in \mathbb{Z}_q^2$.
An NTT/INTT of size $n=2^k$ typically consists of $k$ stages with each stage containing $n / 2$ butterfly operations. Figure~\ref{fig:ntt-butterfly} shows the data-flow graph of a butterfly-based NTT for an input sequence with length $n=8$. All operations are linear in nature, and thus, the NTT/INTT can be viewed as a linear function.

\begin{figure}[h]
  \centering
  \includegraphics[width=.8\columnwidth]{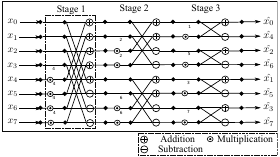}
  \caption{Data flow graphs of a butterfly-based NTT for size $n=8$~\cite{ravi2023fiddling}.}
  \label{fig:ntt-butterfly}
\end{figure}

\begin{lemma}\label{lemma:groups}
  Let $G$ be an abstract finite group under the operation $\circ$, and let $x$ be an arbitrary value from the group. If $\tilde{r}$ is a uniform random value, then so is $x \circ \tilde{r}$.
\end{lemma}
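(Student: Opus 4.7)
The plan is to reduce the claim about uniformity to the fact that left-multiplication by a fixed group element is a bijection on a finite group. Specifically, for each fixed $x \in G$, I would define the map $L_x : G \to G$ by $L_x(r) = x \circ r$ and show that this map permutes $G$. Once $L_x$ is a permutation of the finite set $G$, the result follows immediately from the general principle that applying a bijection to a uniformly distributed random variable on a finite set yields another uniformly distributed random variable on that set.

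First I would verify that $L_x$ is a bijection. Because $G$ is a group, $x$ has a two-sided inverse $x^{-1}$, and the map $L_{x^{-1}}$ satisfies $L_{x^{-1}} \circ L_x = L_{x^{-1} \circ x} = L_e = \mathrm{id}_G$, and symmetrically $L_x \circ L_{x^{-1}} = \mathrm{id}_G$, so $L_x$ has a two-sided inverse and is bijective. (Alternatively, injectivity follows directly from left-cancellation in a group, and on a finite set injectivity implies bijectivity.)

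Next I would translate this to a statement about distributions. For any $y \in G$, compute
\[
  \Pr[x \circ \tilde{r} = y] \;=\; \Pr[\tilde{r} = L_x^{-1}(y)] \;=\; \Pr[\tilde{r} = x^{-1} \circ y] \;=\; \frac{1}{|G|},
\]
where the last equality uses that $\tilde{r}$ is uniform on $G$ and $x^{-1} \circ y$ is a well-defined element of $G$. Since this probability is $1/|G|$ for every $y \in G$, the random variable $x \circ \tilde{r}$ is uniformly distributed on $G$.

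There is no real technical obstacle here; the lemma is essentially a restatement of the fact that a group acts on itself freely and transitively by left translation. The only subtlety worth flagging is that finiteness of $G$ is what lets us identify a uniform distribution simply by equal point-masses; if $G$ were infinite, one would need to be more careful with the notion of ``uniform,'' but under the finiteness hypothesis stated in the lemma, the elementary counting argument above suffices.
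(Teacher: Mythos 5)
Your proof is correct and follows essentially the same route as the paper's: the paper also observes that $z \mapsto x \circ z$ is a bijection on $G$ and concludes that a uniform $\tilde{r}$ maps to a uniform $x \circ \tilde{r}$. You simply spell out the bijectivity via the inverse and the pointwise probability computation that the paper leaves implicit.
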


\begin{proof}
  Consider the function $f(z)=x \circ z$. Since $G$ is a group this is a one-to-one onto function. Thus, if $\tilde{r}$ is selected randomly, then so is $f(\tilde{r})$.
\end{proof}

Consider such a transformation $T\left(x_1, \ldots, x_n\right)$ where the values $x_i$ are fixed point numbers, i.e., 2-complement's arithmetic of some fixed size. This follows since the transformation is linear. Thus,
$
  T\left(x_1, \ldots, x_n\right)=T\left(x_1+\tilde{r}_1, \ldots, x_n+\tilde{r}_n\right)-T\left(\tilde{r}_1, \ldots, \tilde{r}_n\right)
$.

\begin{algorithm}
  \caption{2-secure NTT ($P, x_1, \ldots, x_n \in \mathbb{Z}_q^2$).}\label{alg:ntt}
  \small
  Choose $\tilde{r}_1, \ldots, \tilde{r}_n \in_{\mathbb{U}} \mathbb{Z}_q^2$\\
  \Return $\text{NTT}\left(x_1 +\tilde{r} _1, \ldots, x_n +\tilde{r} _n\right) - \text{NTT}\left(\tilde{r} _1, \ldots, \tilde{r} _n\right)$\;
\end{algorithm}

The key point here is that since fixed-point values are a group under addition, the value $x_i+\tilde{r}_i$ is a uniform random value by Lemma~\ref{lemma:groups}. Note that the function basis consists of just addition~\cite{lipton1991new}.
The countermeasure for NTT is given in Algorithm~\ref{alg:ntt}.

\section{End-to-End Implementations}\label{sec:implementation}
In this section, we introduce implementations of the RSA-CRT signature algorithm and Kyber's key generation algorithm, detailing existing vulnerabilities and how we can protect them against them using our methods.

\paragraph*{Securing RSA-CRT Algorithm.}

RSA is a cryptographic algorithm commonly used in digital signatures and SSL certificates. Due to the security of RSA, which relies on the difficulty of factoring the product of two large prime numbers, the calculation of RSA is relatively slow. Therefore, it is seldom used to encrypt the data directly.

For efficiency, many popular cryptographic libraries (e.g., OpenSSL) use RSA based on the Chinese remainder theorem(CRT) for encryption or signing messages. Algorithm \ref{alg:RSA-CRT} is the RSA-CRT signature generation algorithm. With the private key, we pre-calculate the values $d_p = d \mod (p-1)$, $d_q = d \mod (q-1)$ and $u = q^{-1} \mod p $, then generate the intermediate value $s_p = m^{d_p} \mod p$, $s_q = m^{d_q} \mod q$. Finally, combine two intermediate value $s_p$, $s_q$ with the Garner's algorithm $S = s_q + (((s_p-s_q)\cdot u) \mod p) \cdot q$
The RSA based on CRT is about four times faster then classical RSA.

\begin{algorithm}
  \small
  \DontPrintSemicolon
  \LinesNumbered  
  \KwIn{A message $M$ to sign, the private key $(p,q,d)$, with $p > q$, pre-calculated values $d_p = d \mod (p - 1)$, $d_q = d \mod (q - 1)$, and $u = q^{-1} \mod p$.}
  \KwOut{A valid signature $S$ for the message $M$.}

  \BlankLine
  $m \gets$ Encode the message $M$ in $m \in \mathbb{Z}_N$\; 
  $s_p \gets m^{d_p} \mod p$\Comment*[r]{\red{Protection with Algorithm~\ref{alg:mod-exp}}}
  $s_q \gets m^{d_q} \mod q$\Comment*[r]{\red{Protection with Algorithm~\ref{alg:mod-exp}}}
  $t \gets s_p - s_q$\;
  \If{$t < 0$}{
    $t \gets t + p$\;
  }
  $S \gets s_q + ((t \cdot u) \mod p) \cdot q$\;
  \Return{$S$ as a signature for the message $M$}\;

  \caption{RSA-CRT Signature Generation Algorithm}
  \label{alg:RSA-CRT}
\end{algorithm}

However, using CRT to improve RSA operation efficiency makes RSA vulnerable. For instance, in \cite{aumuller2003fault}, Aum\"{u}ller et al. provided the fault-based cryptanalysis method of RSA-CRT that the attacker can intentionally induce the fault during the computation, which changes $s_p$ to faulty $\hat{s_p}$, to obtain the faulty output and factorize $N$ by using the equation {\small $q = gcd((s'^e-m) \mod N, N)$} to recover the secret key. Sung-Ming et al. provided another equation that can factorize $N$ with faulty signature in \cite{yen2003hardware}. There are two scenarios that the attacker can break the RSA-CRT. If the attacker knows the value of the message and faulty output, they can factorize $N$ with the previous equation. On the other hand, if the attacker knows the value of correct and faulty signatures, they can factorize $N$ with the equation {\small $q = gcd((\hat{s}-s) \mod N, N)$}.

We protect Line 2 and Line 3 of Algorithm \ref{alg:RSA-CRT} using the proposed countermeasure against the attack introduced in~\cite{aumuller2003fault}.

\paragraph*{Securing Kyber Key Generation Algorithm.}

The NIST standardization process for post-quantum cryptography~\cite{nist:pqc} has finished its third round, and provided a list of new public key schemes for new standardization~\cite{alagic2022status}. While implementation performance and theoretical security guarantees served as the main criteria in the initial rounds, resistance against side-channel attacks (SCA) and fault injection attacks (FIA) emerged as an important criterion in the final round, as also clearly stated by NIST at several instances~\cite{ravi2021sidechannel}.

\begin{algorithm}
  \DontPrintSemicolon
  \small
  \LinesNumbered  

  \BlankLine
  $seed_A \gets$ Sample$_U()$\; 
  $seed_B \gets$ Sample$_U()$\; 
  $\hat{A} \gets$ NTT($A$)\; 
  $s \gets$ Sample$_B(seed_B, coins_s)$\; 
  $e \gets$ Sample$_B(seed_B, coins_e)$\; 
  $\hat{s} \gets$ NTT($s$)\Comment*[r]{\red{Protection with Algorithm~\ref{alg:ntt}}} 
  $\hat{e} \gets$ NTT($e$)\;
  $\hat{t} \gets \hat{A} \odot \hat{s} + \hat{e}$\; 
  \Return{$pk = (seed_A, \hat{t}), sk = (\hat{s})$}\;

  \caption{CPA Secure Kyber PKE (CPA.KeyGen)}
  \label{alg:kyber-pke}
\end{algorithm}

They typically operate over polynomials in polynomial rings, and notably, polynomial multiplication is one of the most computationally intensive operations in practical implementations of these schemes. Among the several known techniques for polynomial multiplication such as the schoolbook multiplier, Toom-Cook~\cite{toom1963complexity} and Karatsuba~\cite{karatsuba1963multiplication}, the Number Theoretic Transform (NTT) based polynomial multiplication~\cite{cooley1965algorithm} is one of the most widely adopted techniques, owing to its superior run-time complexity. Over the years, there has been a sustained effort by the cryptographic community to improve the performance of NTT for lattice-based schemes on a wide-range of hardware and software platforms~\cite{roy2014compact, poppelmann2015high, botros2019memory, abdulrahman2021multi, chung2021ntt}. As a result, the use of NTT for polynomial multiplication yields the fastest implementation for several lattice-based schemes. In particular, the NTT serves as a critical computational kernel used in Kyber~\cite{avanzi2020crystalskyber} and Dilithium~\cite{lyubashevsky2017crystalsdilithium}, which were selected as the first candidates for PQC standardization~\cite{ravi2023fiddling}.

\begin{figure}[ht]
  \centering
  \includegraphics[width=\columnwidth]{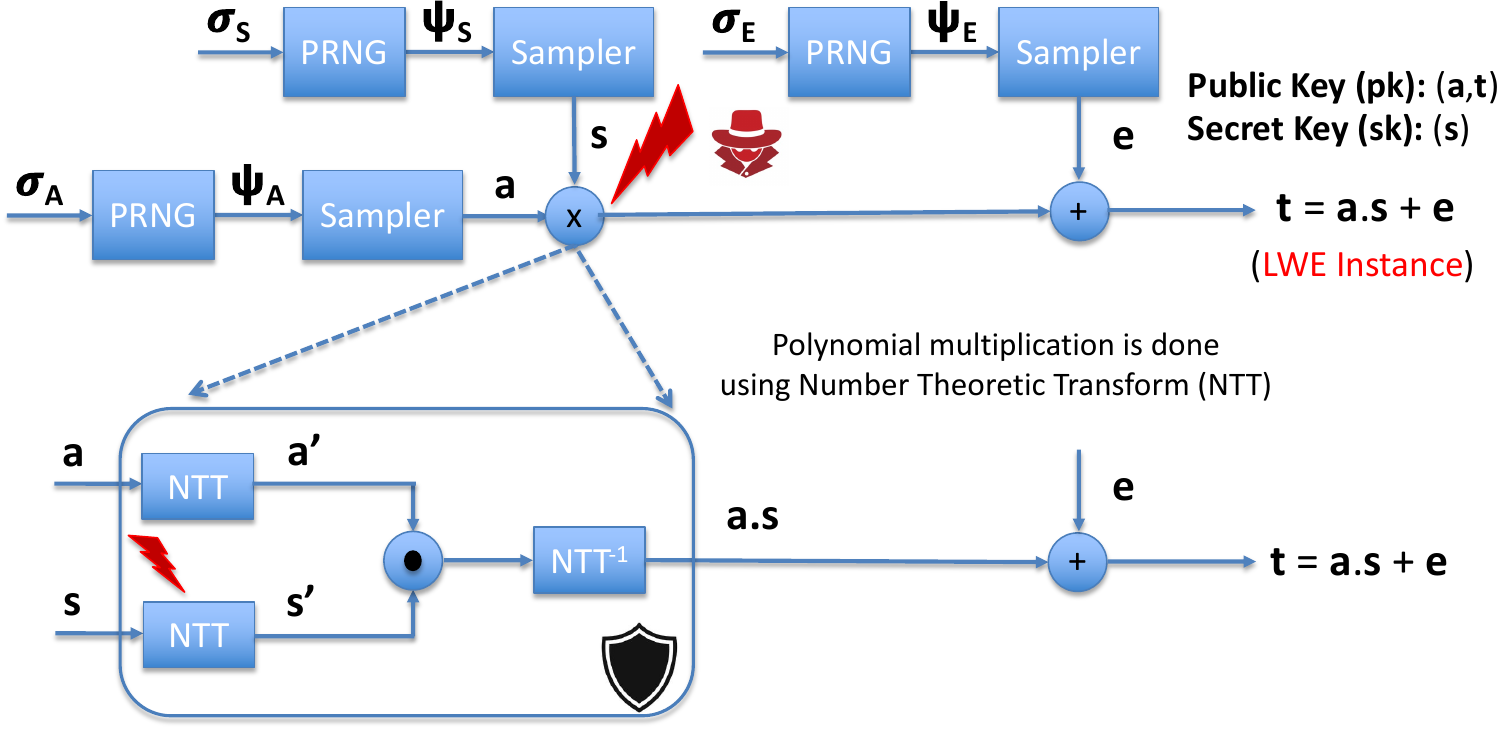}
  \caption{In Kyber Key Generation algorithm polynomial multiplication is done using Number Theoretic Transform (NTT). The NTT is protected using the proposed countermeasure against the attack introduced in~\cite{ravi2023fiddling}.}
  \label{fig:kyber-poly-ntt-sketch}
\end{figure}

Figure~\ref{fig:kyber-poly-ntt-sketch} illustrates a recent fault injection attack~\cite{ravi2023fiddling} that exposes a significant vulnerability in NTT-based polynomial multiplication, allowing the zeroization of all twiddle constants through a single targeted fault. This vulnerability enables practical key/message recovery attacks on Kyber KEM and forgery attacks on Dilithium. Moreover, the proposed attacks are also shown to bypass most known fault countermeasures for lattice-based KEMs and signature schemes. 

To safeguard polynomial multiplication, we can employ Algorithm~\ref{alg:poly-mult} or protect individual NTT operations using Algorithm~\ref{alg:ntt}. In this paper, we focus on securing the NTT operation targeted by Ravi et al.\cite{ravi2023fiddling} using Algorithm\ref{alg:ntt}. Consequently, we reinforce Line 6 of Algorithm~\ref{alg:kyber-pke} with our proposed countermeasure against the attack delineated in~\cite{ravi2023fiddling}.

\section{Evaluation}\label{sec:experiments}
We conducted three experimental sets to assess our countermeasure's effectiveness against fault injection and power side-channel attacks. Initially, we evaluated protected operations individually, including modular multiplication, modular exponentiation, and NTT. Subsequently, we assessed our countermeasure's robustness within RSA-CRT and Kyber key generation algorithms. Finally, we examined the latency overhead introduced by our countermeasure.

To capture power traces, for our experiments we use an ATSAM4S-based target board. SAM4S is a microcontroller based around the 32-bit ARM cortex-m4 processor core, which is commonly used in embedded systems such as IoT devices. The specific target board comes with the ChipWhisperer Husky~\cite{2014:204}, which is the equipment that we used for power trace collection.

The voltage fault injection test bed is created using Riscure's VC Glitcher product\footnote{\url{https://www.riscure.com/products/vc-glitcher/}} that generates an arbitrary voltage signal with a pulse resolution of 2 nanoseconds. We use a General Purpose Input Output (GPIO) signal to time the attack which allows us to inject a glitch at the moment the target is executing the targeted code. The target's reset signal is used to reset the target prior to each experiment to avoid data cross-contamination. All fault injection experiments are performed targeting an off-the-shelf development platform built around an {STM32F407 MCU}, which includes an ARM Cortex-M4 core running at {168 MHz}. This Cortex-M4 based MCU has an instruction cache, a data cache and a prefetch buffer.

\paragraph*{Power Side-Channel Attack Evaluation.}
In power side-channel evaluation, we use the Hamming Weight leakage model and the Test Vector Leakage Assessment (TVLA)~\cite{tvla} to evaluate the effectiveness of our countermeasure.
The instantaneous power consumption measurement corresponding to a single execution of the target algorithm is referred to as power trace. Each power trace is therefore a vector of power samples, and the t-test has to be applied sample-wise. The obtained vector is referred to as t-trace.

To detect Points-of-Interest,
we employ the Sum of Squared pairwise T-differences (SOST)\cite{template} method, setting the threshold at 20\% of the maximum. The t-test window size is uniformly set to $\pm 8$ for all operations. We define the power side-channel security parameter as $c=2$ in the $c$-secure countermeasure in Algorithm~\ref{alg:psca-countermeasure} applicable to all operations. In the mod operation and modular multiplication, the entire operation is targeted, while in modular exponentiation and NTT, attacks are focused on the constant-time Montgomery ladder~\cite{montgomery1985modular, liu2010efficient} modular exponentiation function. For TVLA analysis, two sets of test vectors were created: one with random numbers of Hamming weight 12 and another with a Hamming weight of 4, using 1000 random numbers for each. These vectors were used for evaluating both protected and unprotected cryptographic operations.

In our study, we also evaluated the distinguishability of total power consumption in modular operations and modular multiplication. For modular multiplication, we maintained one operand's value constant while varying the other operand among numbers with different Hamming Weights. This approach enables a comparative analysis of power consumption patterns in modular operations, particularly between unprotected and protected versions, offering insights into how variations in Hamming Weight influence power consumption in these protected cryptographic operations.

Our evaluation indicates that the RSR countermeasure significantly reduced t-test results, bringing them into acceptable regions. For example, in the mod operation, the maximum t-test result decreased from 415.7 to 4.12, and for NTT, it dropped from 417.7 to 7.69. These results, which are detailed in Table~\ref{fig:psca-ttest}, demonstrate an average reduction of two orders of magnitude, highlighting the effectiveness of the RSR countermeasure in enhancing the security of cryptographic operations against side-channel attacks.

\begin{figure*}
    \centering
    \subfigure[Unprotected Mod Operation]{
        \includegraphics[width=0.23\textwidth]{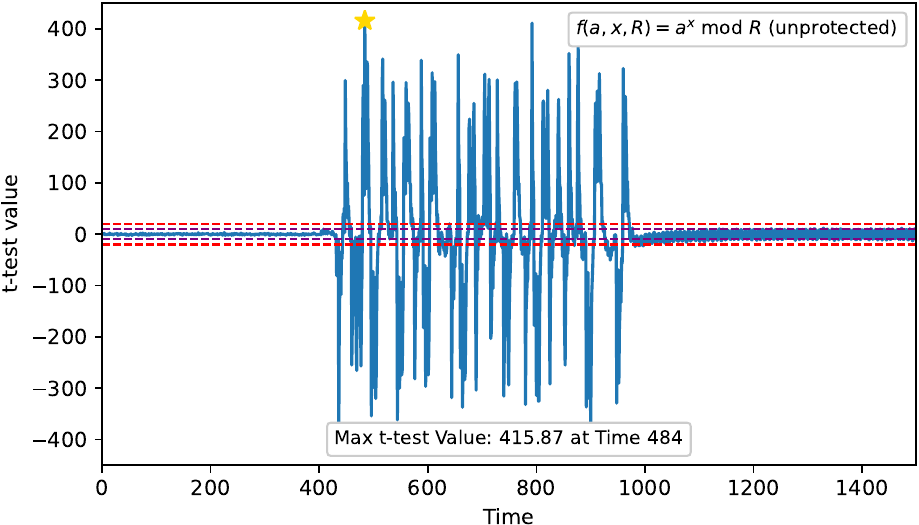}
        \label{fig:t-tests:poly-mult-unprotected}
    }
    \subfigure[\textcolor{darkgreen}{Protected} Mod Operation]{
        \includegraphics[width=0.23\textwidth]{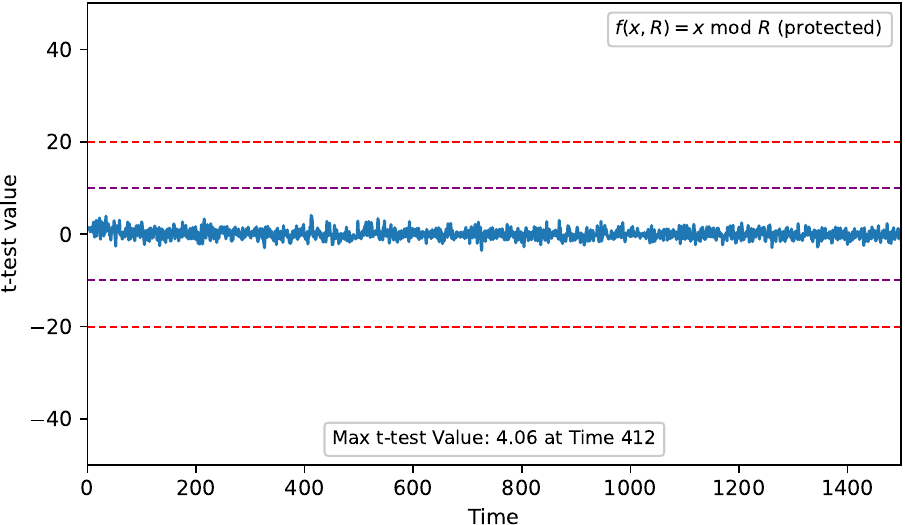}
        \label{fig:t-tests:poly-mult-protected}
    }
    \subfigure[Unprotected Mod. Mult.]{
        \includegraphics[width=0.23\textwidth]{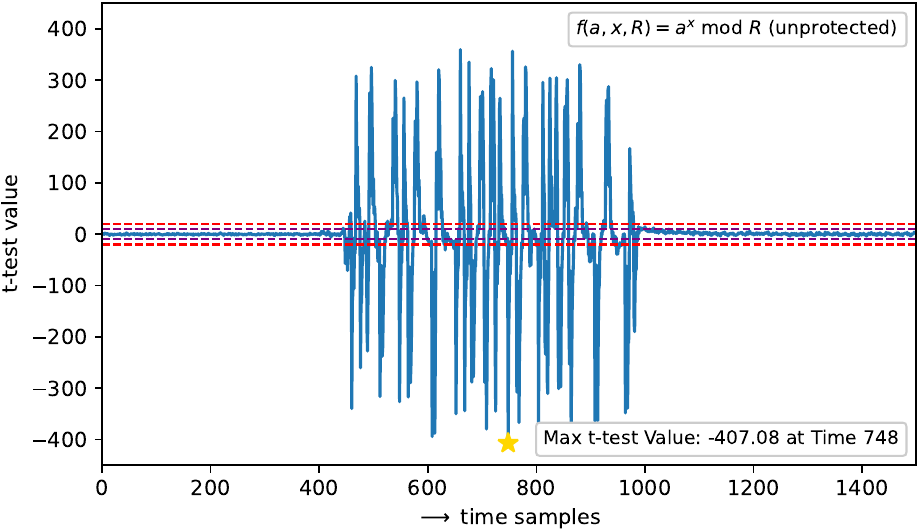}
        \label{fig:ttest:mod-mult-unprotected}
    }
    \subfigure[\textcolor{darkgreen}{Protected} Mod. Mult.]{
        \includegraphics[width=0.23\textwidth]{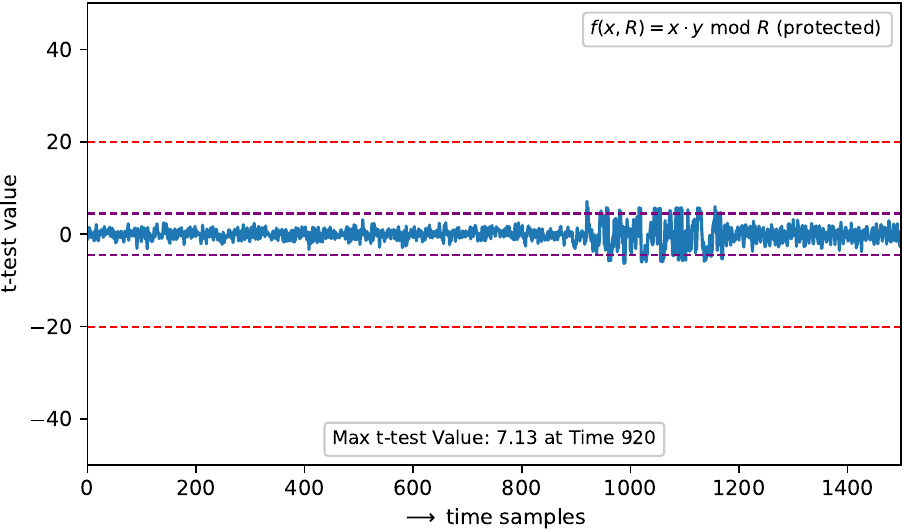}
        \label{fig:ttest:mod-mult-protected}
    }
    \subfigure[Unprotected Mod. Exp.]{
        \includegraphics[width=0.23\textwidth]{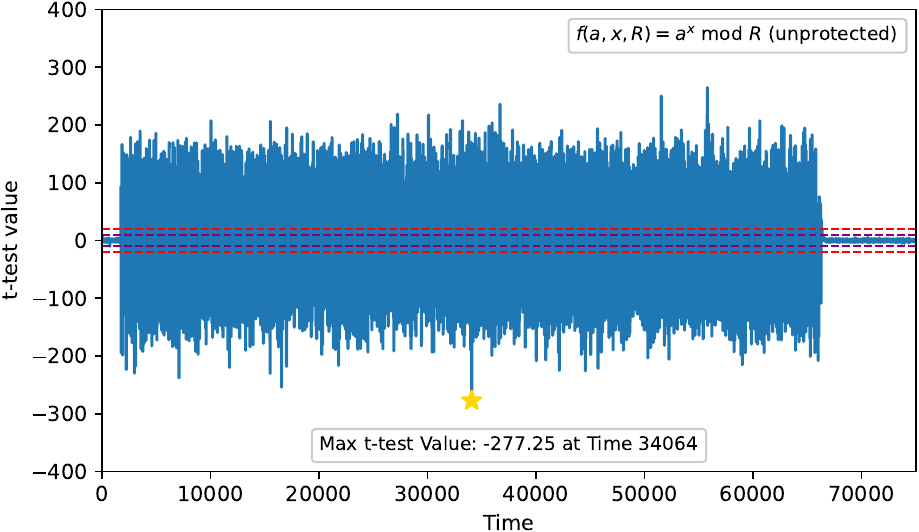}
        \label{fig:ttest:mod-exp-unprotected}
    }
    \subfigure[\textcolor{darkgreen}{Protected} Mod. Exp.]{
        \includegraphics[width=0.23\textwidth]{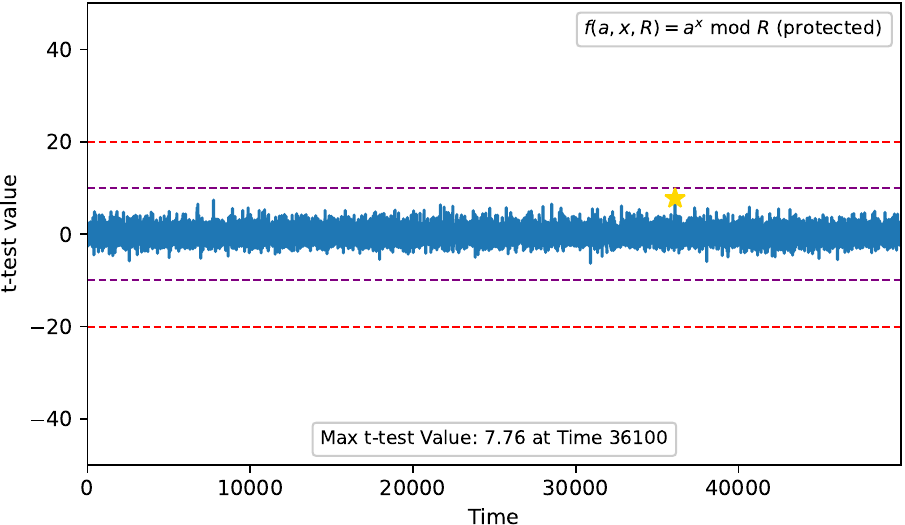}
        \label{fig:ttest:mod-exp-protected}
    }
    \subfigure[Unprotected NTT]{
        \includegraphics[width=0.23\textwidth]{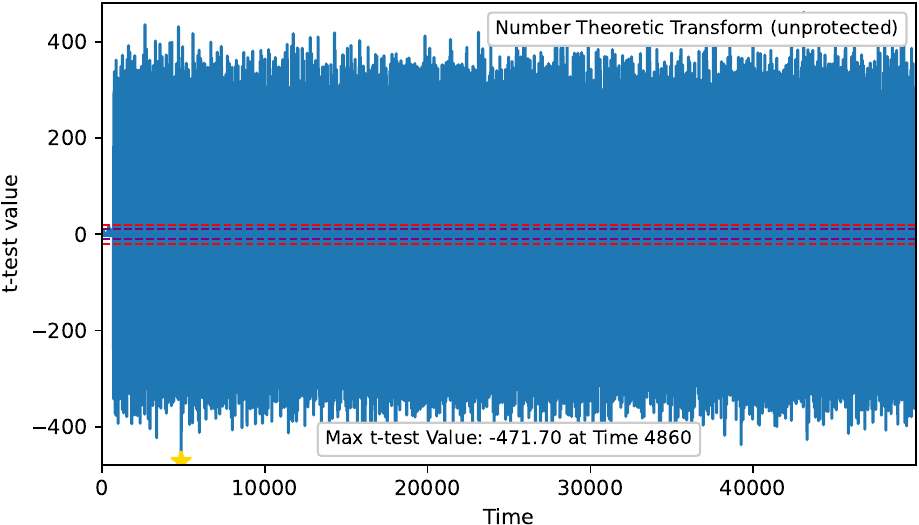}
        \label{fig:t-tests:ntt-unprotected}
    }
    \subfigure[\textcolor{darkgreen}{Protected} NTT]{
        \includegraphics[width=0.23\textwidth]{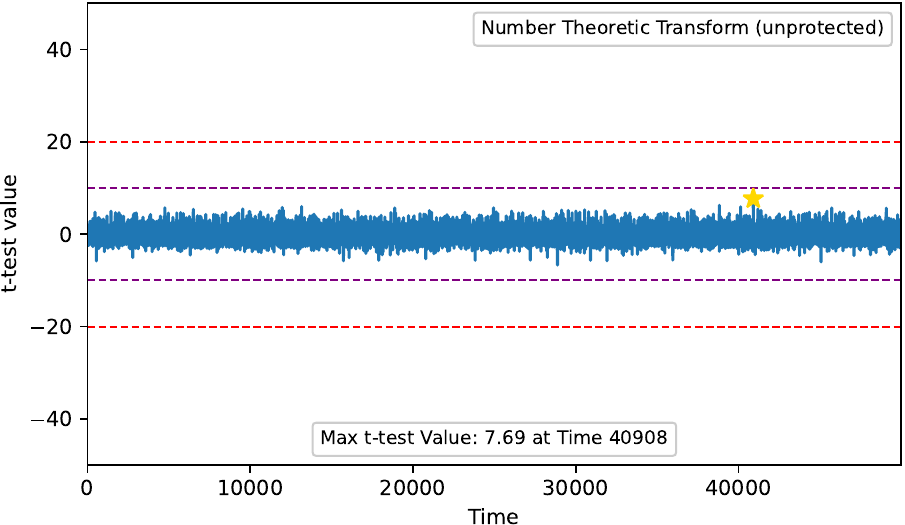}
        \label{fig:t-tests:ntt-protected}
    }
    \caption{Power Side-Channel Attack Evaluation t-tests}
    \label{fig:psca-ttest}
\end{figure*}

\paragraph*{Fault Injection Attack Evaluation.}
In the fault injection attack evaluation, we use the model of injecting faults to cause changes to the desired output, comparing the desired output to the one of the fault. We set the fault injection security parameter as $n=10$ for $n$-secure countermeasure~\ref{alg:fia-countermeasure} for all operations.

\begin{figure*}
    \centering
    \subfigure[Unprotected Mod. Mult.]{
        \includegraphics[width=0.23\textwidth]{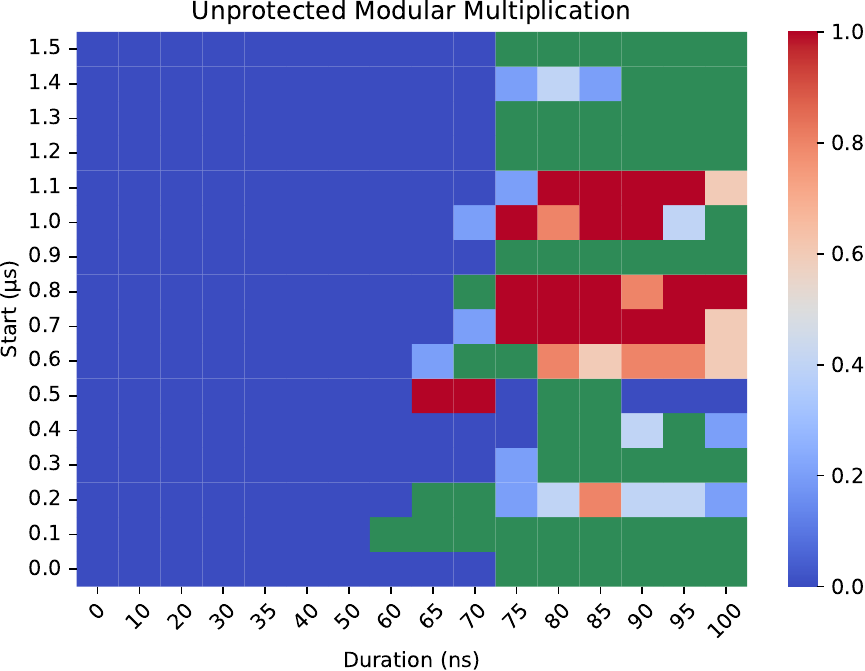}
        \label{fig:heatmaps:mod-mult-unprotected}
    }
    \subfigure[\textcolor{darkgreen}{Protected} Mod. Mult.]{
        \includegraphics[width=0.23\textwidth]{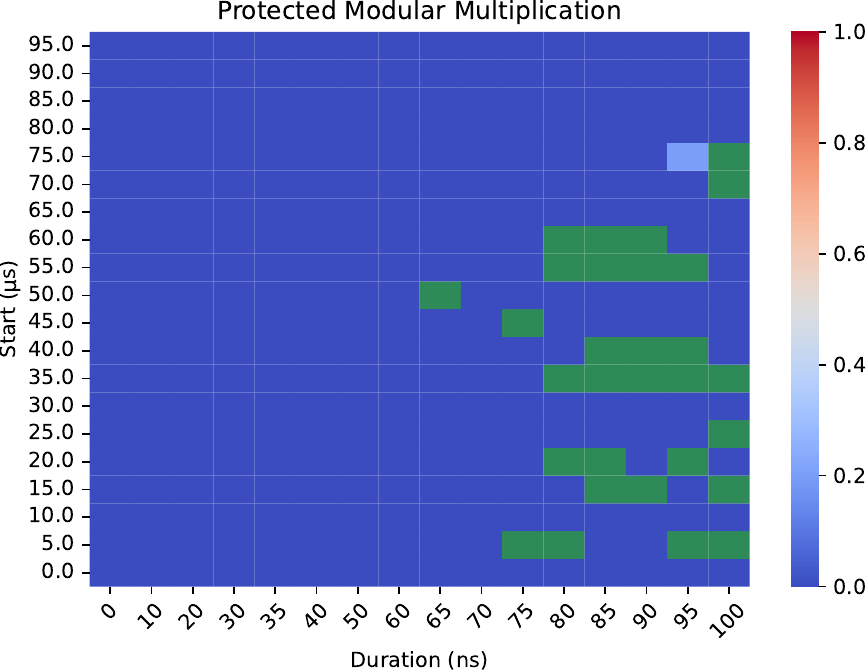}
        \label{fig:heatmaps:mod-mult-protected}
    }
    \subfigure[Unprotected Mod. Exp.]{
        \includegraphics[width=0.23\textwidth]{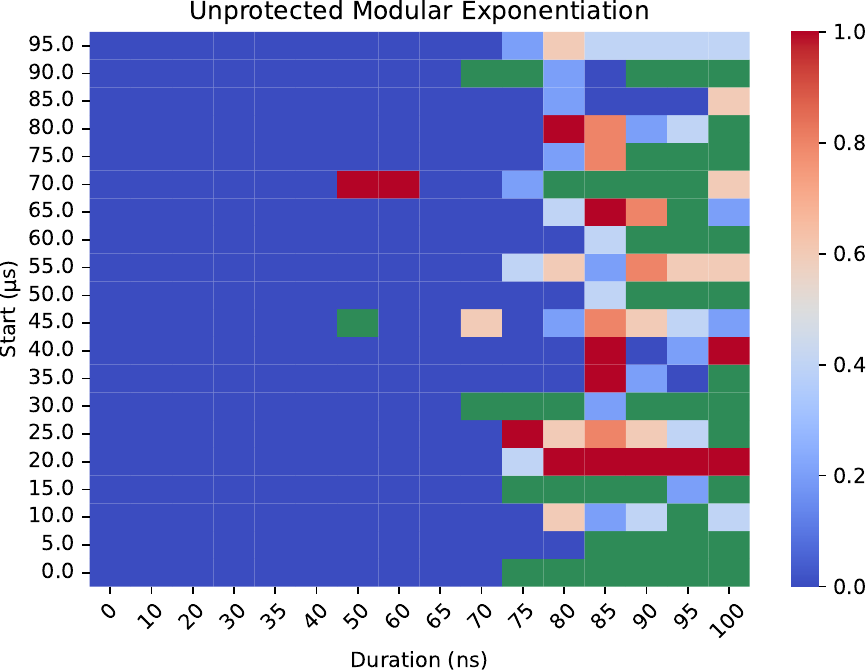}
        \label{fig:heatmaps:mod-exp-unprotected}
    }
    \subfigure[\textcolor{darkgreen}{Protected} Mod. Exp.]{
        \includegraphics[width=0.23\textwidth]{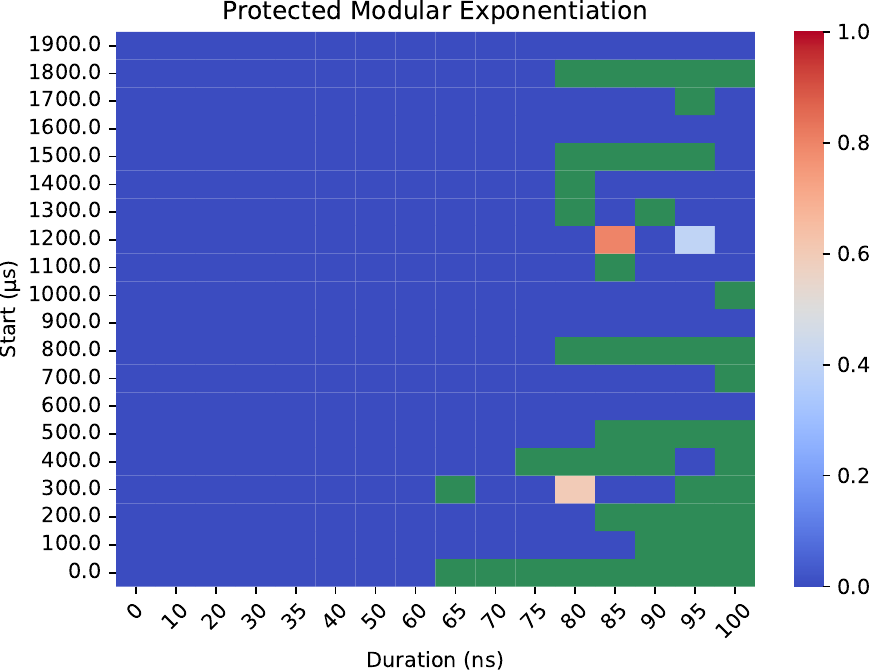}
        \label{fig:heatmaps:mod-exp-protected}
    }
    \subfigure[Unprotected Poly. Mult.]{
        \includegraphics[width=0.23\textwidth]{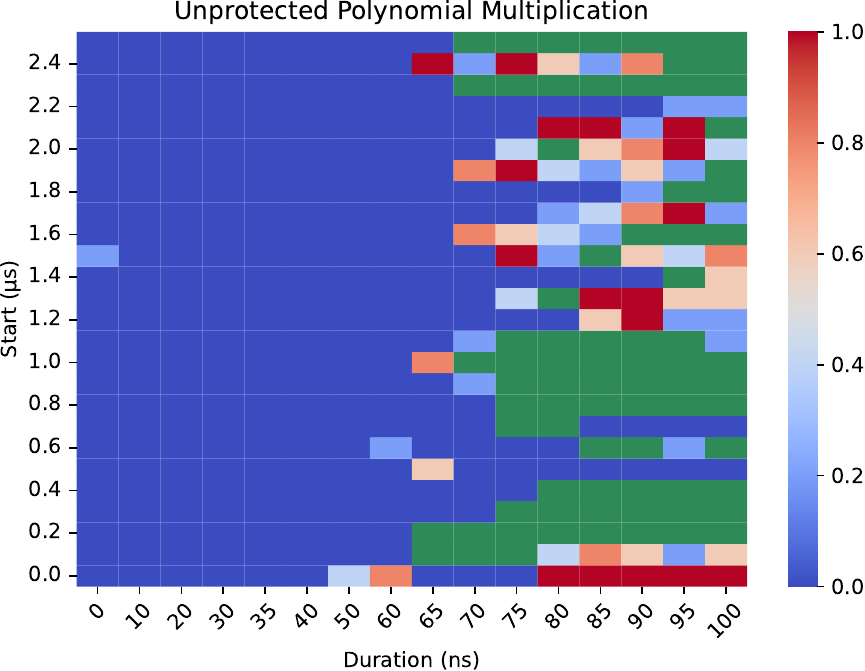}
        \label{fig:heatmaps:poly-mult-unprotected}
    }
    \subfigure[\textcolor{darkgreen}{Protected} Poly. Mult.]{
        \includegraphics[width=0.23\textwidth]{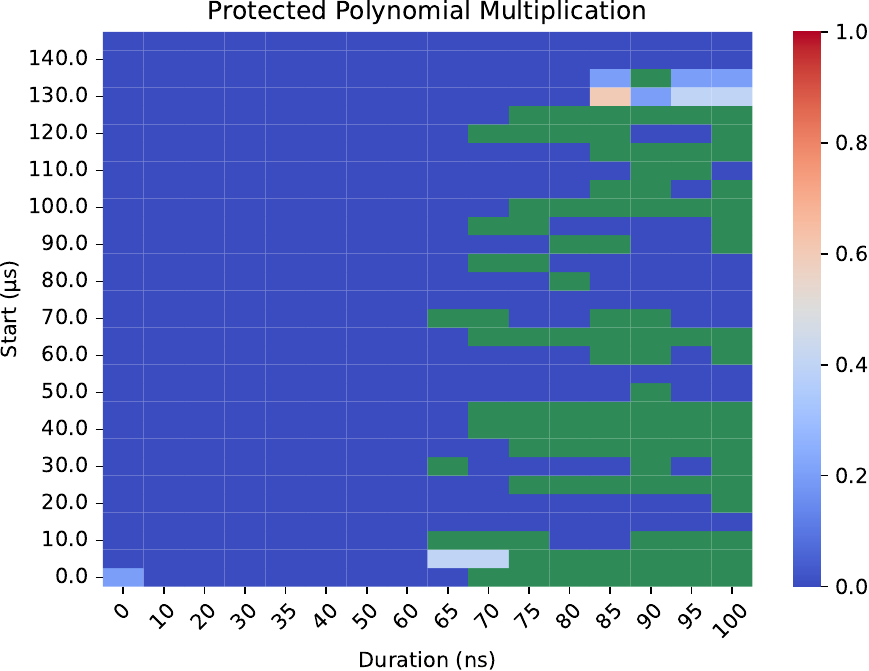}
        \label{fig:heatmaps:poly-mult-protected}
    }
    \subfigure[Unprotected NTT]{
        \includegraphics[width=0.23\textwidth]{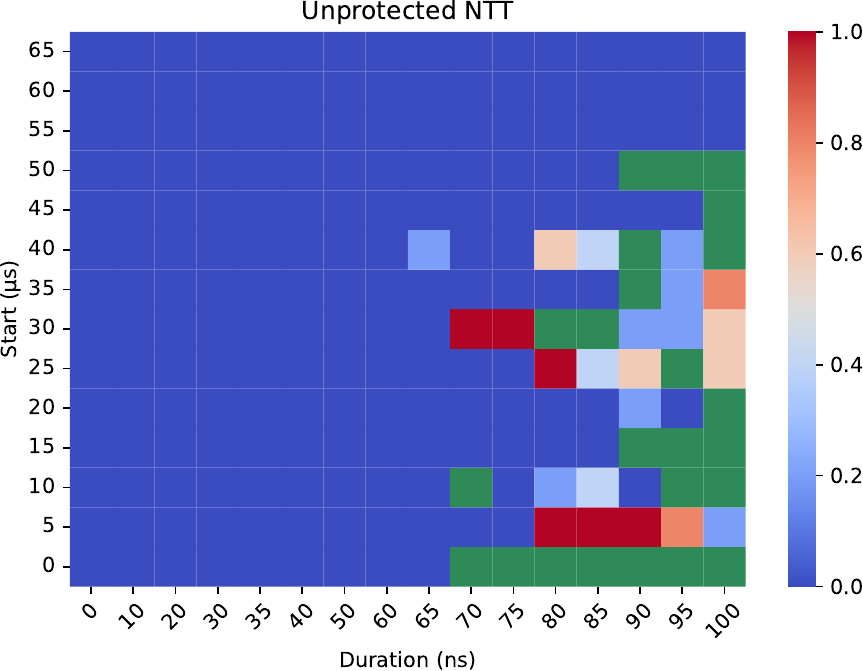}
        \label{fig:heatmaps:ntt-unprotected}
    }
    \subfigure[\textcolor{darkgreen}{Protected} NTT]{
        \includegraphics[width=0.23\textwidth]{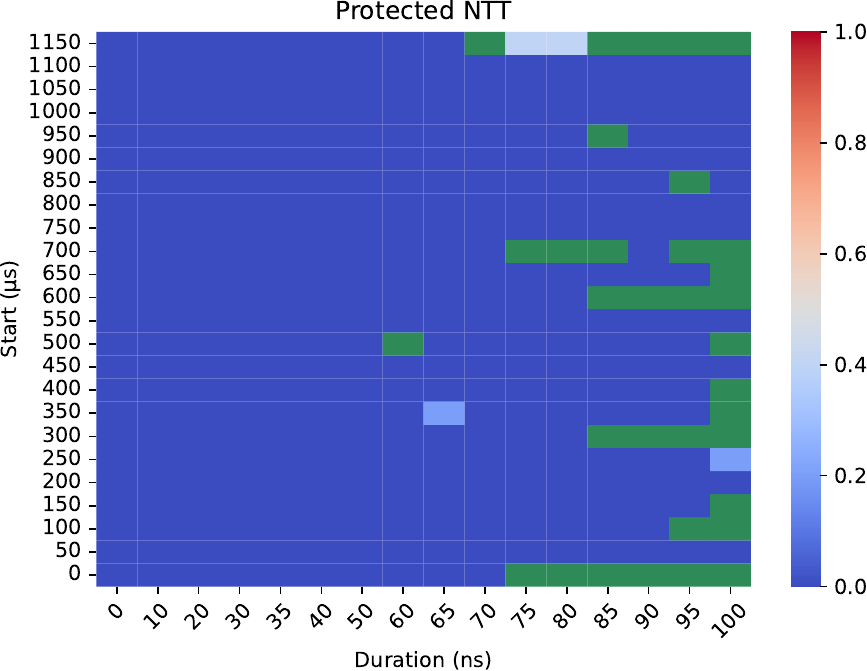}
        \label{fig:heatmaps:ntt-protected}
    }
    \subfigure[Unprotected RSA-CRT]{
        \includegraphics[width=0.23\textwidth]{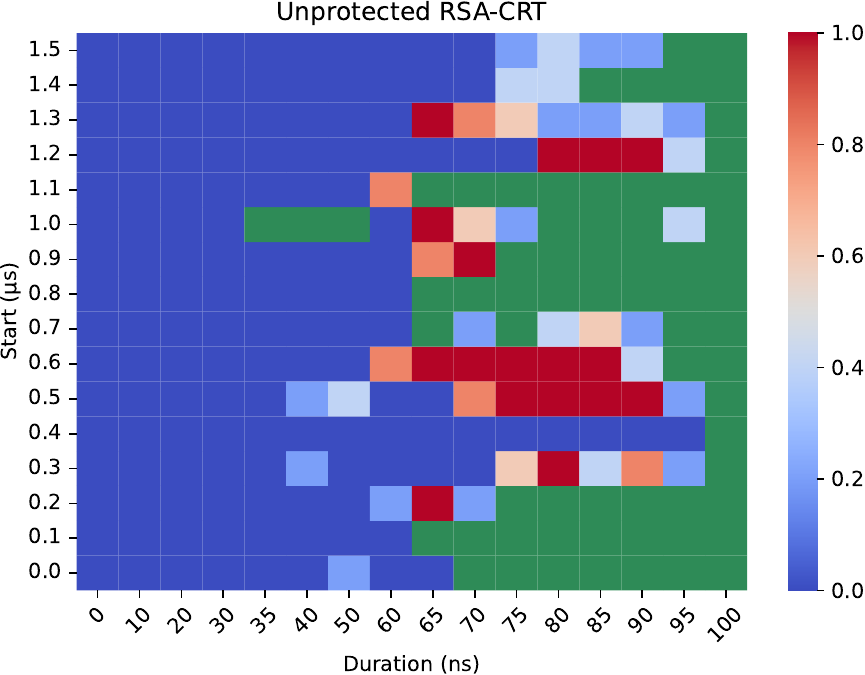}
        \label{fig:heatmaps:rsa-unprotected}
    }
    \subfigure[\textcolor{darkgreen}{Protected} RSA-CRT]{
        \includegraphics[width=0.23\textwidth]{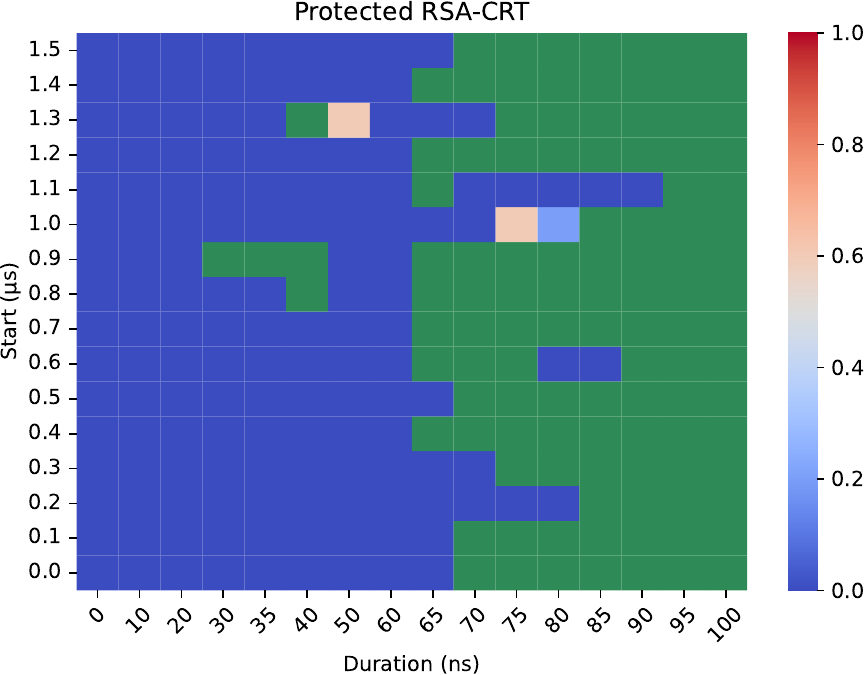}
        \label{fig:heatmaps:rsa-protected}
    }
    \subfigure[Unprotected Kyber Key Gen.]{
        \includegraphics[width=0.23\textwidth]{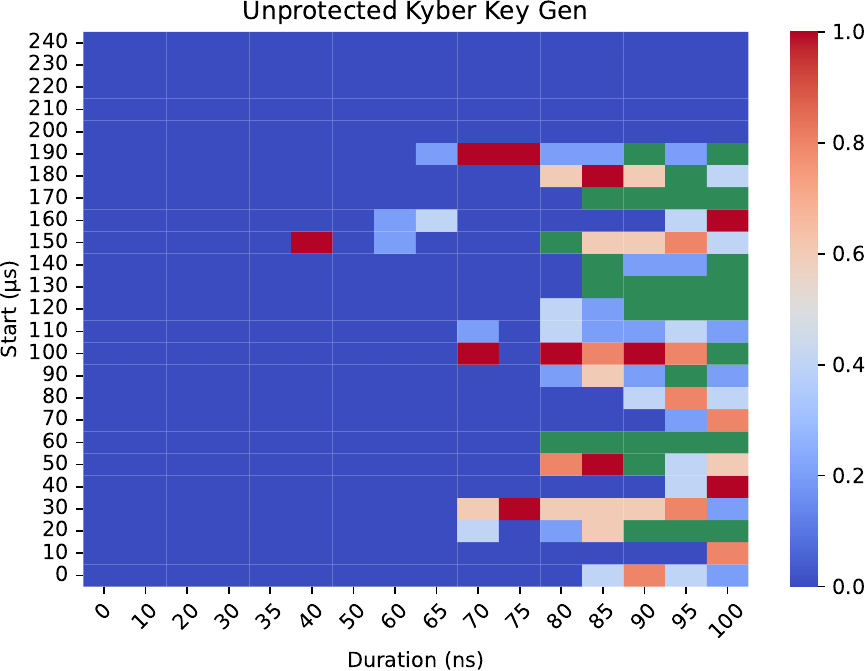}
        \label{fig:heatmaps:kyber-unprotected}
    }
    \subfigure[\textcolor{darkgreen}{Protected} Kyber Key Gen.]{
        \includegraphics[width=0.23\textwidth]{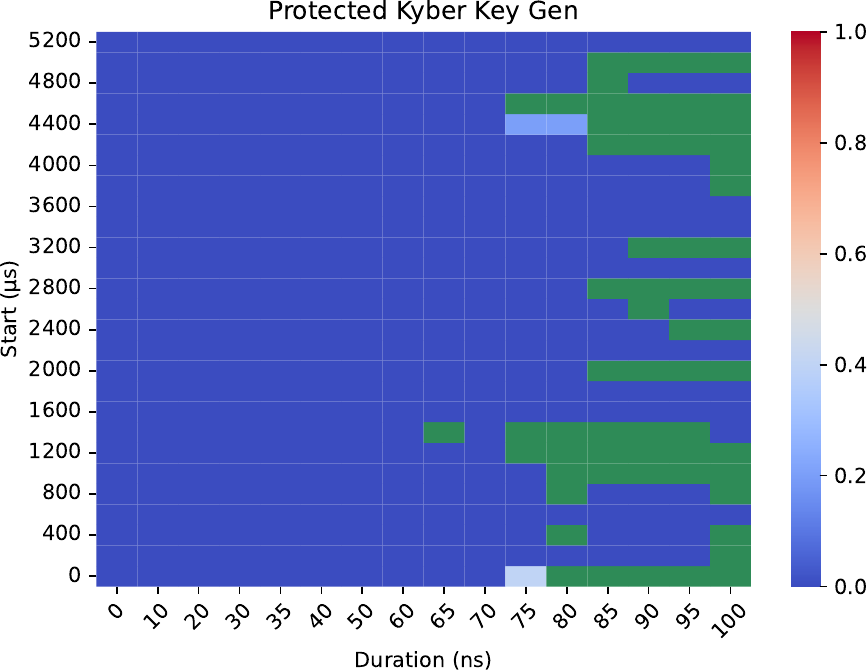}
        \label{fig:heatmaps:kyber-protected}
    }
    \caption{Fault Injection Attack Evaluation Heatmaps}
    \label{fig:fia-heatmaps}
\end{figure*}

\newcommand{\CalculatePercentage}[2]{%
  \pgfmathparse{100*(1-#2/#1)}%
  \pgfmathprintnumber{\pgfmathresult}\%}

\begin{table}[ht]
  \caption{Reduction in Faults for Different Operations}
  \label{tab:fault_reduction}
  \centering
  \footnotesize
  \begin{NiceTabular}{|Ol|Oc|Oc|Oc|}
    \toprule
    \bf Operation        & \bf Unprotected & \bf Protected & \bf Reduction                 \\
    \midrule
    Mod. exponentiation  & 165             & 9            & \CalculatePercentage{165}{9} \\
    \midrule
    Mod. multiplication  & 168             & 1             & \CalculatePercentage{168}{1}  \\
    \midrule
    NTT                  & 63              & 5             & \CalculatePercentage{63}{5}   \\
    \midrule
    Poly. multiplication & 196             & 14            & \CalculatePercentage{196}{14} \\
    \midrule
    \midrule
    RSA-CRT              & 168             & 7             & \CalculatePercentage{168}{7}  \\
    \midrule
    Kyber Key. Gen.      & 172             & 4             & \CalculatePercentage{172}{4}  \\
    \bottomrule
  \end{NiceTabular}
\end{table}

Figure \ref{fig:fia-heatmaps} presents the results of our fault attack experiment. We employed voltage glitches for the fault injection attacks. 
The Glitch Offset is the time between when the trigger is observed and when the glitch is injected. The Glitch Length is the time for which the Glitch Voltage is set. Glitch Offset and Glitch Length are the two parameters that we varied to inject faults, they correspond to the start time and duration of the glitch, respectively.
For each combination of start time and duration, we executed each target function five times, resulting in a total of 1280 test data points for each function. We used heatmaps to illustrate the ratio of faulty to correct outputs. This experiment yielded three types of outputs: faulty, correct, and board reset. In the heatmaps, colors closer to red indicate a higher likelihood of voltage glitches causing faulty outputs (red = 100\%), whereas blue signifies a lower likelihood (blue = 0\%). Green indicates instances where all test outputs resulted in the board being reset. 
We treated any output that is not the correct output as a fault, this is very conservative, as some of the outputs may not be effective faults.
From these heatmaps, the unprotected functions exhibit a significantly higher number of red dots, indicating more faults.
Furthermore, Table \ref{tab:fault_reduction} demonstrates the reduction of faults in target functions, with our protection method reducing approximately 95.4\% of faulty outputs in average, up to 99.4\% in modular multiplication. Collectively, these results affirm the effectiveness of our protection method in safeguarding the functions. 

We observed fault injection sometimes breaks memory allocations (malloc) without causing the target to crash. This is due to the fact that the target is not designed to handle such faults. We believe that this is a potential avenue for future work, as it may lead to new types of attacks. We simply reset the target in such cases, as we are not interested in the results of these attacks. However, in some cases, the fault progresses to the next operation silently without causing a crash and the target continues to operate. We registered these cases as successful attacks in the heatmaps. 

We additionally protected the fault-injection countermeasure method (Algorithm~\ref{alg:fia-countermeasure}) using classical techniques. After exiting the loop, the code verifies loop completed successfully. If not, the code resets the target. This is a simple and effective way to protect the countermeasure from fault injection attacks. This led to a reduction in faults 4.56\% in average.

\section{Limitations}\label{sec:limitations}
Our study presents a novel software-based countermeasure against physical attacks such as power side-channel and fault-injection attacks, utilizing the concept of random self-reducibility and instance hiding for number theoretic operations. While our approach offers significant advantages over traditional methods, there are several inherent limitations. Firstly, the countermeasure's effectiveness is intrinsically linked to the random self-reducibility of the function being protected. This dependency means that our approach may not be universally applicable to all cryptographic operations. Secondly, redundancy and randomness inevitably introduce computational overhead. Nevertheless, each call to original function $P$ can be easily parallelized in hardware or vectorized software implementations. This parallelization can potentially increase the noise, and we identify this as an avenue for future work. 
Finally, our approach is not tailored to defend against attacks targeting the random number generator itself. Nevertheless, there are also simple duplication based techniques to protect random number generators from physical attacks. For instance, one such technique involves comparing two successive random numbers to determine if they are identical or not, as discussed in the work of Ravi et al.~\cite{ravi2022side}.

\section{Related Work}\label{sec:related-work}

To the best of our knowledge, this work is the first to apply random self-reducibility to protect against physical attacks. However, random self-reducibility has been used in several other areas of computer science, including cryptography protocols~\cite{goldwasser2019probabilistic, blum2019generate}, average-case complexity~\cite{feigenbaum1993random}, instance hiding schemes~\cite{abadi1987hiding, beaver1990hiding, beaver1991security}, result checkers~\cite{blum1990self, lund1992algebraic} and interactive proof systems~\cite{blum1995designing, goldwasser2019knowledge, lund1992algebraic, shamir1992ip}.

Fault Injection and Side-Channel Analysis attacks are a risk for microcontrollers operating in a hostile environment where attackers have physical access to the target. These attacks can break cryptographic algorithms and recover secrets either by e.g changing the control flow of the program (FI) or by monitoring the device's power consumption with little or no evidence~\cite{cryptoeprint:2017/1082}.

Multiple countermeasures such as random delays~\cite{clavier2000differential}, masking~\cite{prouff2013masking}, infection~\cite{joye2007strengthening}, data redundancy checks~\cite{maistri2008double, medwed2008generic} and instruction redundancy~\cite{barenghi2010countermeasures} have been proposed to tackle these threats, yet their impact, effectiveness and potential interactions remain open for investigation. Therefore, our work aims to provide a new countermeasure to mitigate these attacks by combining a power side-channel countermeasure with a fault injection countermeasure~\cite{cryptoeprint:2017/1082}.

In the introduction section, we have mentioned that a fault injection attack is a kind of attack in which the attacker intentionally induced the fault to obtain the faulty output and analyze it with the original output to recover the secret, which means that a fault injection attack is based on the fault output. Therefore, if we can reduce the possibility of the attacker obtaining the faulty output, we can mitigate fault injection attacks. There are two intuitive ways to reduce the probability of the attacker obtaining the faulty output: adding the check operation, which is at the software level, or protection device, which is at the hardware level. For instance, the most famous method is Shamir's countermeasure \cite{shamir1999method}. It adds a check operation before outputting the signature to prevent the fault injection attack on RSA-CRT. Even though it has been proven that the attacker can bypass the check operation in Shamir's countermeasure and obtain fault outputs, it still provides a good concept for mitigating the fault injection attack. Some devices, such as EM pulse, voltage glitch, or laser can prevent environmental noise at the hardware level. For example, we can reduce the impact of EM pulse by adding a surge protector. We can prevent laser attacks by employing beam stops. Notice that at the hardware level, that physical device can only mitigate the particular physical fault injection method and cannot fully protect the device from all types of fault injection attacks.

\section{Conclusion}\label{sec:conclusion}
In this work, we show that if a cryptographic operation has a random self-reducible property, then it is possible to protect it against physical attacks such as power side-channel and fault-injection attacks with a configurable security. We have demonstrated the effectiveness of our method through empirical evaluation across critical cryptographic operations including modular exponentiation, modular multiplication, polynomial multiplication, and number theoretic transforms (NTT). Moreover, we have successfully showcased end-to-end implementations of our method within two public key cryptosystems: the RSA-CRT signature algorithm and the Kyber Key Generation, to show the practicality and effectiveness of our approach. 
The countermeasure reduced the power side-channel leakage by two orders of magnitude, to an acceptably secure level in TVLA analysis. For fault injection, the countermeasure reduces the number of faults to 95.4\% in average.
Although the countermeasures were introduced as software-based, they can be more efficiently implemented in hardware, particularly on FPGAs. Each call to $P$ can be parallelized in hardware, potentially increasing the noise. We identify this as an avenue for future work.

\bibliographystyle{plain}
\bibliography{IEEEabrv, bibtex/bibliography, bibtex/crypto, bibtex/pascal}

\appendix
\section{2PC Protocol}\label{sec:2pc}

\begin{algorithm}
  \small
  \SetKwFunction{FastGCD}{fast\_gcd}
  \SetAlgoLined
  \SetKwProg{Fn}{Function}{:}{}
  \Fn{\FastGCD{$f, g$}}{
    $d \gets \max(\text{f.nbits}(), \text{g.nbits}())$\;
    \lIf{$d < 46$}{
      $m \gets \left\lfloor\frac{{49d + 80}}{{17}}\right\rfloor$
    }
    \lElse{
      $m \gets \left\lfloor\frac{{49d + 57}}{{17}}\right\rfloor$
    }
    $precomp \gets \text{Integers}(f)\left({{f + 1}}/{{2}}\right)^{m - 1}$\;
    $v, r, \delta \gets 0, 1, 1$\;
    \For{$n \gets 0$ \KwTo $m$}{
      \If{$\delta > 0$ \textbf{and} $g \operatorname{\&} 1 = 1$}{
        $\delta, f, g, v, r \gets -\delta, g, -f, r, -v$
      }
      $g_0 \gets g \operatorname{\&} 1$\;
      $\delta, g, r \gets 1 + \delta, \frac{{g + g_0 \cdot f}}{2}, \frac{{r + g_0 \cdot v}}{2}$\;
      $g \gets \text{ZZ}(g)$\;
    }
    $inverse \gets \text{ZZ}(\text{sign}(f) \cdot \text{ZZ}(v \cdot 2^{m - 1}) \cdot \text{precomp})$\;
    \KwRet{$inverse$}
  }
  \caption{Fast GCD ($f, g$)}\label{alg:fast-gcd}
\end{algorithm}

\begin{algorithm}
  \small
  \SetKwFunction{ModularExponentiation}{modular\_exponentiation}
  \SetAlgoLined
  \SetKwProg{Fn}{Function}{:}{}
  \Fn{\ModularExponentiation{$x, y, p$}}{
    $res \leftarrow 1$\;
    $x \leftarrow x \mod p$\;
    \If{$x = 0$}{
      \KwRet{$0$}\;
    }
    \While{$y > 0$}{
      \If{$(y \operatorname{\&} 1) = 1$}{
        $res \leftarrow (res \cdot x) \mod p$\;
      }
      $y \leftarrow y \gg 1$\;
      $x \leftarrow (x \cdot x) \mod p$\;
    }
    \KwRet{$res$}\;
  }
  \caption{Modular Exponentiation ($x, y, p$)}\label{alg:modular-exponentiation}
\end{algorithm}

\end{document}